\documentclass[letterpaper, 10 pt, conference]{ieeeconf}
\IEEEoverridecommandlockouts
\overrideIEEEmargins 

\usepackage{amsfonts}
\usepackage{graphicx}
\usepackage{float}
\usepackage{comment}
\usepackage{mathrsfs}
\usepackage{xcolor}

\usepackage{amsthm,amssymb} 
\usepackage{amsmath}
\newtheorem{theorem}{Theorem}
\newtheorem{assumption}{Assumption}
\newtheorem{corollary}{Corollary}
\newtheorem{lemma}{Lemma}

\newtheorem*{remark}{Remark}

\newcommand{\x}{\textup{x}}
\newcommand{\y}{\textup{y}}
\newcommand{\w}{\textup{w}}

\def\BibTeX{{\rm B\kern-.05em{\sc i\kern-.025em b}\kern-.08em
    T\kern-.1667em\lower.7ex\hbox{E}\kern-.125emX}}

\begin{document}

\title{\LARGE \bf Estimates for weighted homogeneous delay systems:\\ A Lyapunov-Krasovskii-Razumikhin approach*}

\author{Gerson~Portilla$^{1},$ Irina V. Alexandrova$^{2}$ and Sabine~Mondié$^{1}$
\thanks{$^{1}$Gerson~Portilla and Sabine~Mondié are with the Department of Automatic Control, CINVESTAV-IPN, 07360 Mexico D.F., Mexico {\tt\small \{gportilla,smondie\}@ctrl.cinvestav.mx}}%
\thanks{$^{2}$Irina V. Alexandrova is with the Department of Control Theory, St. Petersburg State University, 7/9 Universitetskaya nab., St. Petersburg, 199034, Russia {\tt\small i.v.aleksandrova@spbu.ru}}%
\thanks{*The work of Gerson Portilla and Sabine Mondié was supported by Projects {\tt\small CONACYT A1-S-24796} and {\tt\small SEP-CINVESTAV 155}, Mexico. The work of Irina Alexandrova was supported by the Russian Science Foundation, Project {\tt\small19-71-00061}.}
}

\maketitle

\begin{abstract}
In this paper, we present estimates for solutions and for the attraction domain of the trivial solution for systems with delayed and nonlinear weighted homogeneous right-hand side of
positive degree. The results are achieved via a generalization of the Lyapunov-Krasovskii functional construction presented recently for homogeneous systems with standard dilation. Along with the classical approach for the calculation of the estimates within the Lyapunov-Krasovskii framework, we develop a novel approach which combines the use of Lyapunov-Krasovskii functionals with ideas of the Razumikhin framework. More precisely, a lower bound for the functional on a special set of functions inspired by the Razumikhin condition is constructed, and an additional condition imposed on the solution of the comparison equation ensures that this bound can be used to estimate all solutions in a certain neighbourhood of the trivial one. An example shows that this approach yields less conservative estimates in comparison with the classical one.
\end{abstract}

%

\section{Introduction}
When the linear approximation is zero, the homogeneous one can be used
for nonlinear systems analysis and control. Generalised definitions such as weighted homogeneity \cite{rosier1992homogeneous}, \cite{bacciotti2006liapunov} or in the bi-limit \cite{andrieu2008homogeneous} allow covering wider classes of nonlinear systems. The Lyapunov framework has produced significant results on stability \cite{zubov1964methods}, \cite{hermes1991homogeneous}, robustness \cite{rosier1992homogeneous} as well as observer and controller design \cite{hermes1991homogeneous}, to name a few. \\
To study homogeneous systems with delays, researchers have naturally resorted to the Lyapunov-Razumikhin framework \cite{efimov2011homogeneity}. Some general results on delay-independent and finite-time stability for the cases of positive and negative degree, respectively, as well as for stability of locally homogeneous systems are obtained in \cite{efimov2016weighted, efimov2016global}. For weighted homogeneous systems of positive degree, the delay-independent stability was established with the help of the Lyapunov function of the corresponding delay-free system \cite{aleksandrov2014weighted}. The approach has allowed to present the estimates for solutions \cite{aleksandrov2012asymptotic} and for the attraction region, as well as to analyze perturbed systems and complex systems describing the interaction between several subsystems \cite{aleksandrov2014delay}. Moreover, contrary to the often expressed view that the Razumikhin approach allows obtaining qualitative estimates only, it was shown recently \cite{portilla2020comparison} that the estimates of \cite{aleksandrov2012asymptotic} are close enough to the system response. A similar conclusion was made in \cite{efimov2020onestimation} for different kind of systems.\\
Recently, for the case of standard dilation and homogeneity degree strictly greater than one, a Lyapunov-Krasovskii functional was introduced in \cite{Voronezh}, \cite{alexandrova2019lyapunov}. It was inspired by the Lyapunov functional of complete type for delay linear systems \cite{kharitonov2003lyapunov}, \cite{kharitonov2013time}, and lead to stability and instability results \cite{zhabko2020automatica}, estimates of the region of attraction \cite{alexandrova2019lyapunov} and of the system response \cite{portilla2020comparison}, see also \cite{portilla2020thesis}.\\
In this contribution, we extend this functional to the case of weighted homogeneous time-delay systems of positive degree and use it to construct quantitative estimates of the region of attraction and of the system response.
Two approaches are developed. The first one is based on the classical ideas of the Lyapunov-Krasovskii method, whereas a combination of Lyapunov-Krasovskii and Razumikhin techniques is used in the second one. The idea is to construct a lower bound for the Lyapunov-Krasovskii functional which is only valid on a special set of functions inspired by the Razumikhin condition, see \cite{medvedeva2015synthesis}  and \cite{alexandrova2018junction} for the linear and nonlinear cases, respectively.
Exploiting the ideas in \cite{aleksandrov2012asymptotic}, we require the solutions of the comparison equation for the functional to satisfy the same condition, thus ensuring  that the final estimates hold for all solutions from a certain neighbourhood. This approach yields better estimates than the classical one.\\
Note that there exists a parallel work on the generalization of the functional of \cite{Voronezh} to the case of weighted dilation covering both asymptotic stability for the case of positive
degree and boundedness of solutions for that of negative degree \cite{efimov2020analysis}. The main difference between the functional we use and those of \cite{efimov2020analysis} is that to cover both cases the authors of \cite{efimov2020analysis} use a more complex construction with additional parameters, a more complicated bounding and non-standard norms, thus achieving a moderate computational performance. In contrast, we bound the functional and its derivative componentwise following naturally the componentwise definition of homogeneous functions, and use natural norm based on the homogeneous vector norm. Additionally, we present fully computed quantitative estimates of the response and of the attraction region, via this combined Lyapunov-Krasovskii-Razumikhin approach.
\\
The contribution is organised as follows. Previous results on homogeneous systems are reminded in section II. The Lyapunov-Krasovskii functional construction is presented in Section III.  The functional is applied to the estimation of the attraction region in Section~IV and of the homogeneous system solutions in Section~V. An illustrative example is given in Section~VI. \\
\textbf{Notation: } The space of $\mathbb{R}^n$ valued continuous functions on $[-h,0]$ endowed with the norm $ \|\varphi\|_h=\max_{\theta\in[-h,0]}\|\varphi(\theta)\|$ is denoted by $C([-h,0],\mathbb{R}^n)$. Here, $\|\cdot\|$ stands for the Euclidean norm. In computations it turns out to be more convenient to use the following homogeneous norm
$$ \|\varphi\|_\mathscr{H}=\max_{\theta\in[-h,0]}\|\varphi(\theta)\|_{r,p},$$
where $\|\cdot\|_{r,p}$ stands for the typical vector homogeneous norm defined below. The solution of a time delay system and the restriction of the solution to the segment $[t-h,t]$, corresponding to the initial function $\varphi\in C([-h,0],\mathbb{R}^n),$ are respectively denoted by $x(t)$ and $x_t$. If the initial condition is important, we write $x(t,\varphi)$ and $x_t(\varphi)$, respectively.

\section{Preliminaries}
We start with a brief reminder of the definitions related to the homogeneity concept \cite{bacciotti2006liapunov,zubov1962oxford}. Define the vector of weights \mbox{$r=(r_1,\ldots,r_n)^T,$} where $r_i>0,$ $i=\overline{1,n},$ and the dilation operator
\begin{equation*}
    \delta_\varepsilon^r(\x)=(\varepsilon^{r_1}\x_1,\ldots,\varepsilon^{r_n}\x_n),\quad \varepsilon>0.
\end{equation*}
Here, $\x=(\x_1,\ldots,\x_n)^T.$
Then, function
\begin{equation*}
    \|\x\|_{r,p}=\left(\sum_{i=1}^n |\x_i|^{p/r_i}\right)^{1/p},
\end{equation*}
where $p\geq 1,$ is called \textit{the $\delta^r$-homogeneous norm}. Although it is not a norm in the usual sense, it has been shown to be equivalent to the Euclidean norm. A scalar function $V:\mathbb{R}^n\rightarrow \mathbb{R}$ is called \textit{$\delta^r$-homogeneous}, if there exists $\mu\in\mathbb{R}$ such that
\begin{equation*}
    V(\delta_\varepsilon^r(\x))=\varepsilon^\mu V(\x)\quad\forall\,\varepsilon>0.
\end{equation*}
A vector function $f=f(\x,\y):\mathbb{R}^{2n}\rightarrow\mathbb{R}^{n}$ is called \textit{$\delta^r$-homogeneous}, if there exists $\mu\in \mathbb{R}$ such that its component $f_i$ is a $\delta^r$-homogeneous function of degree $\mu+r_i,$ i.e.
\begin{equation*}
    f_i(\delta_\varepsilon^r(\x),\delta_\varepsilon^r(\y))=\varepsilon^{\mu+r_i}f_i(\x,\y)\quad\forall\,\varepsilon>0,\quad i=\overline{1,n},
\end{equation*}
where $\x,\y\in\mathbb{R}^n.$
In both cases, the constant $\mu$ is called \textit{the degree of homogeneity}.
It is worth mentioning that the homogeneous norm is a $\delta^r$-homogeneous function of degree one: $\|\delta_\varepsilon^r(\x)\|_{r,p}=\varepsilon\|\x\|_{r,p}.$
Assume that $\mu\geq -\min_{i=\overline{1,n}} r_i.$

\begin{lemma} \label{lemma:bound_f_i}
There exist $m_i>0$ such that the components of the $\delta^r$-homogeneous function $f(\x,\y)$ satisfy  
\begin{equation*}
    |f_i(\x,\y)|\leq m_i\left(\|\x\|_{r,p}^{\mu+r_i}+\|\y\|_{r,p}^{\mu+r_i}\right),\quad i=\overline{1,n}.
\end{equation*}
\end{lemma}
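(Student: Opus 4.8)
The plan is to exploit homogeneity to transfer the estimate from a compact ``unit sphere'' to the whole space, exactly as in the delay-free case. First I would introduce on $\mathbb{R}^{2n}$ the combined dilation $(\x,\y)\mapsto(\delta_\varepsilon^r(\x),\delta_\varepsilon^r(\y))$ and the associated homogeneous norm
\[
N(\x,\y)=\left(\|\x\|_{r,p}^{\,p}+\|\y\|_{r,p}^{\,p}\right)^{1/p}.
\]
Since $\|\delta_\varepsilon^r(\x)\|_{r,p}=\varepsilon\|\x\|_{r,p}$, the function $N$ scales as $N(\delta_\varepsilon^r(\x),\delta_\varepsilon^r(\y))=\varepsilon N(\x,\y)$, it is continuous, and — being built from homogeneous norms, which are equivalent to the Euclidean norm — it is itself equivalent to the Euclidean norm on $\mathbb{R}^{2n}$. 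Consequently the set $S=\{(\x,\y)\in\mathbb{R}^{2n}:N(\x,\y)=1\}$ is compact and does not contain the origin.

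Next, for any $(\x,\y)\neq 0$ I would set $\varepsilon=N(\x,\y)>0$ and $(\x',\y')=(\delta_{1/\varepsilon}^r(\x),\delta_{1/\varepsilon}^r(\y))$, so that $(\x',\y')\in S$ and $(\x,\y)=(\delta_\varepsilon^r(\x'),\delta_\varepsilon^r(\y'))$. Using the $\delta^r$-homogeneity of $f_i$ of degree $\mu+r_i$,
\[
|f_i(\x,\y)|=\varepsilon^{\mu+r_i}\,|f_i(\x',\y')|\leq M_i\,N(\x,\y)^{\mu+r_i},\qquad M_i:=\max_{(\x',\y')\in S}|f_i(\x',\y')|,
\]
where $M_i<\infty$ by continuity of $f_i$ on the compact set $S$. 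The case $(\x,\y)=0$ is immediate: if $\mu+r_i>0$ both sides vanish, and if $\mu+r_i=0$ then $f_i$ is forced to be constant by homogeneity, so the bound still holds.

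It then remains to replace $N(\x,\y)^{\mu+r_i}$ by the sum appearing in the statement. Writing $u=\|\x\|_{r,p}^{\,p}$, $v=\|\y\|_{r,p}^{\,p}$ and $q=(\mu+r_i)/p$, which is nonnegative because the standing assumption $\mu\geq-\min_j r_j$ gives $\mu+r_i\geq 0$, we have $N(\x,\y)^{\mu+r_i}=(u+v)^q$, and by subadditivity of $t\mapsto t^q$ when $q\leq 1$ and by convexity when $q\geq 1$,
\[
(u+v)^q\leq \max\{1,2^{\,q-1}\}\,(u^q+v^q)=\max\{1,2^{\,q-1}\}\left(\|\x\|_{r,p}^{\mu+r_i}+\|\y\|_{r,p}^{\mu+r_i}\right).
\]
Setting $m_i=M_i\max\{1,2^{(\mu+r_i)/p-1}\}$ finishes the argument. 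The only point requiring care is that $f$ be continuous, so that $M_i$ is finite — this is part of the standing hypotheses on the right-hand side — together with the elementary but necessary power inequality in the last step; everything else is a direct consequence of homogeneity and of the compactness of $S$.
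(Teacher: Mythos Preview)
Your argument is correct and follows the same homogeneity-plus-compactness scheme as the paper: scale $(\x,\y)$ onto a compact ``unit sphere'' via the dilation, bound $|f_i|$ there, and pull back. The only difference is the choice of sphere: the paper normalises directly by $\|\x\|_{r,p}^{\mu+r_i}+\|\y\|_{r,p}^{\mu+r_i}=1$ (a separate level set for each $i$), which delivers the desired sum form immediately and so dispenses with your final power inequality $(u+v)^q\le\max\{1,2^{q-1}\}(u^q+v^q)$; your route through the single norm $N$ is equally valid but costs that extra step and a slightly larger constant.
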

\begin{proof} If $\mu+r_i>0,$ then we take
$$
m_i=\max_{\|\x\|_{r,p}^{\mu+r_i}+\|\y\|_{r,p}^{\mu+r_i}=1}|f_i(\x,\y)|>0,$$
and $\varepsilon=(\|\x\|_{r,p}^{\mu+r_i}+\|\y\|_{r,p}^{\mu+r_i})^{-1/(\mu+r_i)}.$ It can be easily seen that
$$
\|\delta_\varepsilon^r(\x)\|_{r,p}^{\mu+r_i} + \|\delta_\varepsilon^r(\y)\|_{r,p}^{\mu+r_i}=1.
$$
This implies
$$
    |f_i(\delta_\varepsilon^r(\x),\delta_\varepsilon^r(\y))|\leq m_i.
$$
Using the definition of homogeneity, we arrive at the result. If \mbox{$\mu+r_i=0,$} then the same conclusion can be drawn with
$$
m_i=\max_{\|\x\|_{r,p}^{k}+\|\y\|_{r,p}^{k}=1}|f_i(\x,\y)|>0
$$
for any $k>0.$
\end{proof}
\begin{lemma} \label{lemma:bound_der_f_ij}
Assume that $f(\x,\y)$ is continuously differentiable with respect to $\x$ and $\delta^r$-homogeneous. Then, there exist \mbox{$\eta_{ij}>0$} such that
\begin{equation*}
    \left|\frac{\partial f_i(\x,\y)}{\partial \x_j}\right|\leq \eta_{ij}\left(\|\x\|_{r,p}^{\mu+r_i-r_j}+\|\y\|_{r,p}^{\mu+r_i-r_j}\right),\ i,j=\overline{1,n},
\end{equation*}
if $\mu+r_i-r_j>0,$ and 
\begin{equation*}
    \left|\frac{\partial f_i(\x,\y)}{\partial \x_j}\right|\leq \frac{\eta_{ij}}{\left(\|\x\|_{r,p}^{-\mu-r_i+r_j}+\|\y\|_{r,p}^{-\mu-r_i+r_j}\right)},\ i,j=\overline{1,n},
\end{equation*}
if $\mu+r_i-r_j<0$ and
at least one of the vectors $\x$ and $\y$ is nonzero.
\end{lemma}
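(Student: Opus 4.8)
The plan is to reduce the statement to Lemma~\ref{lemma:bound_f_i}, after first observing that each partial derivative $\partial f_i/\partial \x_j$ is itself a scalar $\delta^r$-homogeneous function. To establish this, I would differentiate the componentwise homogeneity identity $f_i(\delta_\varepsilon^r(\x),\delta_\varepsilon^r(\y))=\varepsilon^{\mu+r_i}f_i(\x,\y)$ with respect to $\x_j$ at a fixed $\varepsilon>0$. Since the $k$-th entry of $\delta_\varepsilon^r(\x)$ equals $\varepsilon^{r_k}\x_k$, the chain rule leaves on the left-hand side the single term $\varepsilon^{r_j}\bigl(\partial f_i/\partial \x_j\bigr)(\delta_\varepsilon^r(\x),\delta_\varepsilon^r(\y))$, while the right-hand side becomes $\varepsilon^{\mu+r_i}\bigl(\partial f_i/\partial \x_j\bigr)(\x,\y)$. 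Cancelling $\varepsilon^{r_j}$ shows that $g_{ij}:=\partial f_i/\partial \x_j$ satisfies $g_{ij}(\delta_\varepsilon^r(\x),\delta_\varepsilon^r(\y))=\varepsilon^{\mu+r_i-r_j}g_{ij}(\x,\y)$, i.e.\ $g_{ij}$ is $\delta^r$-homogeneous of degree $d_{ij}:=\mu+r_i-r_j$; it is also continuous because $f$ is $C^1$ in $\x$.

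If $d_{ij}>0$, I would then repeat, for the scalar function $g_{ij}$, exactly the normalization argument used in the proof of Lemma~\ref{lemma:bound_f_i}: put $\eta_{ij}=\max_{\|\x\|_{r,p}^{d_{ij}}+\|\y\|_{r,p}^{d_{ij}}=1}|g_{ij}(\x,\y)|$, take $\varepsilon=\bigl(\|\x\|_{r,p}^{d_{ij}}+\|\y\|_{r,p}^{d_{ij}}\bigr)^{-1/d_{ij}}$ so that $(\delta_\varepsilon^r(\x),\delta_\varepsilon^r(\y))$ lies on the homogeneous unit sphere, and transport the inequality $|g_{ij}(\delta_\varepsilon^r(\x),\delta_\varepsilon^r(\y))|\le\eta_{ij}$ back through the homogeneity relation. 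This yields the first claimed bound.

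For $d_{ij}<0$ the same trick applies with the scaling chosen from the other side: select $\varepsilon$ with $\varepsilon^{-d_{ij}}=\bigl(\|\x\|_{r,p}^{-d_{ij}}+\|\y\|_{r,p}^{-d_{ij}}\bigr)^{-1}$, which is legitimate precisely because the hypothesis that at least one of $\x,\y$ is nonzero keeps the denominator positive. Writing $|g_{ij}(\x,\y)|=\varepsilon^{-d_{ij}}|g_{ij}(\delta_\varepsilon^r(\x),\delta_\varepsilon^r(\y))|\le\varepsilon^{-d_{ij}}\eta_{ij}$ with $\eta_{ij}=\max|g_{ij}|$ over the set $\{\|\x\|_{r,p}^{-d_{ij}}+\|\y\|_{r,p}^{-d_{ij}}=1\}$ and substituting $\varepsilon^{-d_{ij}}=1/(\|\x\|_{r,p}^{-d_{ij}}+\|\y\|_{r,p}^{-d_{ij}})$ gives the second bound.

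The point requiring care --- and the only real obstacle --- is the legitimacy of these maxima. I would note that, because $\|\cdot\|_{r,p}$ is continuous and equivalent to the Euclidean norm, each level set $\{\|\x\|_{r,p}^{\alpha}+\|\y\|_{r,p}^{\alpha}=1\}$ with $\alpha>0$ is closed and bounded, hence compact; in the case $d_{ij}<0$ it is additionally bounded away from the origin, so $g_{ij}$ is continuous there and the maximum is attained. If this maximum happens to be zero (i.e.\ $g_{ij}\equiv 0$ on $\mathbb{R}^{2n}$), any $\eta_{ij}>0$ works; otherwise it is strictly positive, completing the proof.
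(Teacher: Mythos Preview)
Your proposal is correct and is precisely the natural argument: differentiate the homogeneity identity to see that $\partial f_i/\partial \x_j$ is itself $\delta^r$-homogeneous of degree $\mu+r_i-r_j$, then repeat the normalization-to-a-compact-level-set trick from the proof of Lemma~\ref{lemma:bound_f_i}, with the sign of the degree dictating which form the bound takes. The paper does not spell out a proof of Lemma~\ref{lemma:bound_der_f_ij}, but your approach is exactly the intended one, mirroring the paper's own proof of Lemma~\ref{lemma:bound_f_i}.
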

Now, consider a time delay system of the form
\begin{equation} \label{eq:time-delay_system}
\dot{x}(t)=f(x(t),x(t-h)), 
\end{equation}
where $x(t)\in \mathbb{R}^n,$ $h>0$ is a constant delay. The following assumptions are made.
\begin{assumption}\label{as:hom_degree}
The vector function $f(\textup{x},\textup{y})$ is continuously differentiable with respect to $\x\in\mathbb{R}^n,$ locally Lipshitz with respect to $\y\in\mathbb{R}^n,$ and $\delta^r$-homogeneous of degree $\mu>0.$
\end{assumption}
\begin{assumption}\label{as:delay_free_asymp}
The delay free system
\begin{equation} \label{eq:delay-free_system}
    \dot{x}(t)=f(x(t),x(t))
\end{equation}
is asymptotically stable.
\end{assumption}
In \cite{rosier1992homogeneous}, \cite{zubov1962oxford} the existence of a homogeneous Lyapunov function for system \eqref{eq:delay-free_system} is established. More precisely, it is proven that for any $l\in\mathbb{N}$ and $\gamma\geq l\max_{i=\overline{1,n}}\{r_i\}$ there exists a positive definite $\delta^r$-homogeneous of degree $\gamma$ and of class $C^l$ Lyapunov function $V(\x)$ such that its time derivative with respect to system \eqref{eq:delay-free_system} is a negative definite
$\delta^r$-homogeneous function of degree $\gamma+\mu,$ that is
\begin{equation}
\label{eq:bound_dot_V}
\left(\frac{\partial V(\x)}{\partial \x}\right)^T f(\x,\x)\leq -\w\|\x\|_{r,p}^{\gamma+\mu},\quad \w>0.
\end{equation}
We set $l=2$ and use a Lyapunov function $V(\x)$ of class $C^2$ and the homogeneity degree $\gamma\geq 2\max_{i=\overline{1,n}}\{r_i\}$ below. The following estimates hold \cite{bacciotti2006liapunov}, \cite{zubov1962oxford}:
\begin{gather}\label{eq:bound_V}
\alpha_0\|\x\|_{r,p}^\gamma\leq V(\x)\leq \alpha_1\|\x\|_{r,p}^\gamma,\quad \alpha_0,\alpha_1>0,\\
\label{eq:bound_dV}
\left|\frac{\partial V(\x)}{\partial \x_i}\right|\leq \beta_{i}\|\x\|_{r,p}^{\gamma-r_i},\quad \left|\frac{\partial^2 V(\x)}{\partial \x_i \partial \x_j}\right|\leq \psi_{ij}\|\x\|_{r,p}^{\gamma-r_i-r_j},
\end{gather}
where $\beta_{i},\psi_{ij}>0,$ $i,j=\overline{1,n}.$

It is proved in \cite{aleksandrov2014weighted} that if Assumptions~\ref{as:hom_degree} and \ref{as:delay_free_asymp} hold, then the trivial solution of time delay system \eqref{eq:time-delay_system} is asymptotically stable for all $h>0.$ In the next section, we present the Lyapunov-Krasovskii functional validating this result, and further construct the estimates for the solutions of \eqref{eq:time-delay_system} and for the attraction region. An important step in the obtention of the estimates is the use of a lower bound for the functional on the special set
\begin{multline*}
S_\alpha = \Bigl\{\varphi\in C([-h,0],\mathbb{R}^n)\Bigl\arrowvert\\
\|\varphi(\theta)\|_{r,p}\leq \alpha\|\varphi(0)\|_{r,p},\;\theta\in[-h,0]\Bigr\},
\end{multline*}
where $\alpha>1.$ This set was introduced in the Lyapunov-Krasovskii analysis in \cite{medvedeva2015synthesis} and \cite{alexandrova2018junction} for linear and nonlinear time delay systems, respectively. In particular, it was shown that it is enough to construct the lower bound for the functional on the set $S_\alpha$ in order to prove asymptotic stability.
\section{The Functional Construction}
A natural generalization of the Lyapunov-Krasovskii functional introduced in \cite{Voronezh,alexandrova2019lyapunov} to the case of $\delta^r$-homogeneous time delay systems is
\begin{align}\label{eq:functional}
v(\varphi)&=V(\varphi(0))+\left.\left(\frac{\partial V(\x)}{\partial \x}\right)^T \right|_{\x=\varphi(0)}\int_{-h}^{0}f(\varphi(0),\varphi(\theta))d\theta 
\\
&+\int_{-h}^{0}(\mathrm{w_1}+(h+\theta)\mathrm{w_2})\|\varphi(\theta)\|_{r,p}^{\gamma+\mu}d\theta.\notag
\end{align}
Here, $\mathrm{w}_1, \mathrm{w}_2>0,$ and $\mathrm{w}_0=\mathrm{w}-\mathrm{w}_1-h\mathrm{w}_2>0$. In this section, we show that functional \eqref{eq:functional} satisfies the classical Lyapunov-Krasovskii theorem \cite{hale2013introduction}.
For the sake of brevity, the three summands of \eqref{eq:functional} are denoted by $I_1(\varphi)$, $I_2(\varphi)$ and $I_3(\varphi),$ respectively.

\begin{lemma}
\label{lemma:lower_bound}
There exist $\delta>0$ and $a_1,\,a_2>0$ such that
\begin{equation} \label{eq:lower_bound}
    v(\varphi)\geq a_1\|\varphi(0)\|_{r,p}^\gamma+a_2\int_{-h}^{0}\|\varphi(\theta)\|_{r,p}^{\gamma+\mu} d\theta
\end{equation}
in the neighbourhood $\|\varphi\|_\mathscr{H}\leq \delta.$
\end{lemma}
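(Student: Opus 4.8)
The plan is to split $v(\varphi)=I_1(\varphi)+I_2(\varphi)+I_3(\varphi)$ as in the statement, keep $I_1$ and $I_3$ as the principal positive terms, and show that $|I_2|$ is a lower-order perturbation on a small enough neighbourhood, so that it can be absorbed at the price of shrinking the coefficients. For $I_1$ the left inequality in \eqref{eq:bound_V} gives at once $I_1(\varphi)=V(\varphi(0))\ge\alpha_0\|\varphi(0)\|_{r,p}^\gamma$. For $I_3$, since $\mathrm{w}_1,\mathrm{w}_2>0$ and $h+\theta\ge 0$ on $[-h,0]$, one has $I_3(\varphi)\ge\mathrm{w}_1\int_{-h}^{0}\|\varphi(\theta)\|_{r,p}^{\gamma+\mu}\,d\theta$. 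It remains to estimate $I_2$.

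Expanding $I_2$ componentwise and applying the first estimate in \eqref{eq:bound_dV} to $\partial V/\partial\x_i$ and Lemma~\ref{lemma:bound_f_i} to $f_i$ (applicable since $\mu+r_i>0$), one obtains
\[
|I_2(\varphi)|\le\sum_{i=1}^{n}\beta_i m_i\left(h\|\varphi(0)\|_{r,p}^{\gamma+\mu}+\|\varphi(0)\|_{r,p}^{\gamma-r_i}\int_{-h}^{0}\|\varphi(\theta)\|_{r,p}^{\mu+r_i}\,d\theta\right).
\]
The crucial step is to treat the mixed product $\|\varphi(0)\|_{r,p}^{\gamma-r_i}\|\varphi(\theta)\|_{r,p}^{\mu+r_i}$ via Young's inequality with a free scaling parameter $\kappa>0$ and the \emph{specific} conjugate exponents $p_i'=\frac{\gamma+\mu}{\gamma-r_i}$ and $q_i'=\frac{\gamma+\mu}{\mu+r_i}$, which are both larger than $1$ because $0<r_i<\gamma$ (recall $\gamma\ge 2\max_j r_j$). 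This yields
\[
\|\varphi(0)\|_{r,p}^{\gamma-r_i}\|\varphi(\theta)\|_{r,p}^{\mu+r_i}\le\frac{\kappa^{p_i'}}{p_i'}\|\varphi(0)\|_{r,p}^{\gamma+\mu}+\frac{\kappa^{-q_i'}}{q_i'}\|\varphi(\theta)\|_{r,p}^{\gamma+\mu},
\]
the point of the exponent choice being that the $\varphi(\theta)$-term comes out with exactly the power $\gamma+\mu$ that already multiplies the positive term $I_3$. Integrating in $\theta$ and bounding $\|\varphi(0)\|_{r,p}^{\gamma+\mu}\le\delta^\mu\|\varphi(0)\|_{r,p}^\gamma$ on $\|\varphi\|_\mathscr{H}\le\delta$ (here $\mu>0$ is essential), one arrives at
\[
|I_2(\varphi)|\le\delta^\mu B(\kappa)\|\varphi(0)\|_{r,p}^\gamma+C(\kappa)\int_{-h}^{0}\|\varphi(\theta)\|_{r,p}^{\gamma+\mu}\,d\theta,
\]
with $B(\kappa)=\sum_i\beta_i m_i h\bigl(1+\kappa^{p_i'}/p_i'\bigr)$ and $C(\kappa)=\sum_i\beta_i m_i\kappa^{-q_i'}/q_i'$.

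Combining the three estimates gives
\[
v(\varphi)\ge\bigl(\alpha_0-\delta^\mu B(\kappa)\bigr)\|\varphi(0)\|_{r,p}^\gamma+\bigl(\mathrm{w}_1-C(\kappa)\bigr)\int_{-h}^{0}\|\varphi(\theta)\|_{r,p}^{\gamma+\mu}\,d\theta,
\]
and the constants are picked in a nested fashion: since $C(\kappa)\to 0$ as $\kappa\to\infty$, first fix $\kappa$ so large that $a_2:=\mathrm{w}_1-C(\kappa)>0$; with this $\kappa$ frozen, $B(\kappa)$ is a fixed number, so then choose $\delta>0$ so small that $a_1:=\alpha_0-\delta^\mu B(\kappa)>0$, which yields \eqref{eq:lower_bound}. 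I expect the main obstacle to be precisely this coupling: the integral term inside $|I_2|$ cannot be made negligible by shrinking $\delta$ alone, so it has to be controlled through $\kappa$, which in turn inflates the coefficient of $\|\varphi(0)\|_{r,p}^\gamma$; it is the positivity of the homogeneity degree $\mu$ that supplies the compensating factor $\delta^\mu$ and renders the two choices mutually compatible.
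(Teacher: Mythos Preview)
Your proof is correct and follows essentially the same approach as the paper: bound $I_1$ and $I_3$ directly, estimate $|I_2|$ componentwise via Lemma~\ref{lemma:bound_f_i} and \eqref{eq:bound_dV}, split the mixed product $\|\varphi(0)\|_{r,p}^{\gamma-r_i}\|\varphi(\theta)\|_{r,p}^{\mu+r_i}$ using a free scaling parameter, and then choose that parameter and $\delta$ in a nested fashion. The only cosmetic difference is in the splitting device: the paper uses the elementary inequality $A^{p_1}B^{p_2}\le A^{p_1+p_2}+B^{p_1+p_2}$ together with a parameter $\chi$ (taken small), whereas you use Young's inequality with conjugate exponents and a parameter $\kappa$ (taken large); the roles of $\chi$ and $1/\kappa$ are the same, and both lead to the identical coupling ``first fix the parameter to make $a_2>0$, then shrink $\delta$ to make $a_1>0$.''
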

\begin{proof}
It is obvious that
\begin{equation*}
    I_1(\varphi)\geq \alpha_0\|\varphi(0)\|_{r,p}^\gamma,\;\, I_3(\varphi)\geq \mathrm{w}_1\int_{-h}^{0}\|\varphi(\theta)\|_{r,p}^{\gamma+\mu}d\theta.
\end{equation*}
Now, we use Lemma~\ref{lemma:bound_f_i} and the first of bounds \eqref{eq:bound_dV} for the second summand of the functional:
\begin{align*}
|I_2(\varphi)|
&\leq h\left(\sum_{i=1}^n \beta_i m_i\right)\|\varphi(0)\|_{r,p}^{\gamma+\mu} + \sum_{i=1}^n\beta_i m_i\chi^{\gamma-\mu-2r_i}\\
&\times\left(\frac{\|\varphi(0)\|_{r,p}}{\chi}\right)^{\gamma-r_i}
\int_{-h}^{0}(\chi\|\varphi(\theta)\|_{r,p})^{\mu+r_i}d\theta,
\end{align*}
where $\chi>0$ is a parameter. Using the standard inequality $A^{p_1} B^{p_2}\leq A^{p_1+p_2}+B^{p_1+p_2},$ where $p_1,p_2\geq0$ and $A,B\geq0$, we get
\begin{align*}
|I_2(\varphi)|&\leq h\left(\sum_{i=1}^n \beta_i m_i \left(1+\chi^{-2(\mu+r_i)}\right)\right)\|\varphi(0)\|_{r,p}^{\gamma+\mu}\\&+\left(\sum_{i=1}^n\beta_i m_i\chi^{2(\gamma-r_i)}\right)\int_{-h}^{0}\|\varphi(\theta)\|_{r,p}^{\gamma+\mu}d\theta.
\end{align*}
Combining all summands together and making use of $I_2(\varphi)\geq -|I_2(\varphi)|$ along with $\|\varphi\|_\mathscr{H}\leq \delta,$ we arrive at the lower bound \eqref{eq:lower_bound} with
\begin{align*}
    a_1&=\alpha_0-h\sum_{i=1}^n\beta_i m_i\left(1+\chi^{-2(\mu+r_i)}\right)\delta^{\mu},\\
    a_2&=\mathrm{w}_1-\sum_{i=1}^n\beta_i m_i\chi^{2(\gamma-r_i)}.
\end{align*}
Here, the parameter  $\chi>0$ is chosen such that $a_2>0,$ and
\begin{equation} \label{H1}
    \delta<H_1 =\left(\frac{\alpha_0}{h\sum_{i=1}^n\beta_i m_i\left(1+\chi^{-2(\mu+r_i)}\right)}\right)^{1/\mu}. 
\end{equation}
\end{proof}

\begin{lemma}
\label{lemma:derivative}
There exist $\delta>0$ and $c_0,c_1,c_2>0$ such that the time derivative of functional~\eqref{eq:functional} along the solutions of system \eqref{eq:time-delay_system} satisfies
\begin{align} \label{eq:derivative}
\frac{\mathrm{d}v(x_t)}{\mathrm{d}t}&\leq -c_0\|x(t)\|_{r,p}^{\gamma+\mu}-c_1\|x(t-h)\|_{r,p}^{\gamma+\mu}\\&-c_2\int_{-h}^{0}\|x(t+\theta)\|_{r,p}^{\gamma+\mu}d\theta,\quad\text{if}\quad \|x_t\|_\mathscr{H}\leq \delta.\notag
\end{align}
\end{lemma}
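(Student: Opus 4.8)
The plan is to differentiate $v(x_t)=I_1(x_t)+I_2(x_t)+I_3(x_t)$ term by term along the solutions of \eqref{eq:time-delay_system}, to exploit a cancellation that isolates the negative term supplied by \eqref{eq:bound_dot_V}, and then to absorb everything that remains by using the smallness of $\|x_t\|_\mathscr{H}$. First I would pass to the variable $s=t+\theta$, so that $I_2(x_t)=\left.\left(\frac{\partial V}{\partial\x}\right)^T\right|_{\x=x(t)}\int_{t-h}^{t}f(x(t),x(s))\,ds$ and $I_3(x_t)=\int_{t-h}^{t}\big(\mathrm{w}_1+(h+s-t)\mathrm{w}_2\big)\|x(s)\|_{r,p}^{\gamma+\mu}\,ds$. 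Using that $V\in C^2$, that $f$ is $C^1$ in its first argument and locally Lipschitz in the second, and that $\dot x(t)=f(x(t),x(t-h))$ for $t>0$, the chain and Leibniz rules give (all derivatives of $V$ being evaluated at $x(t)$)
\[
\frac{\mathrm{d}}{\mathrm{d}t}I_1(x_t)=\left(\frac{\partial V}{\partial\x}\right)^T f(x(t),x(t))+\left(\frac{\partial V}{\partial\x}\right)^T\big(f(x(t),x(t-h))-f(x(t),x(t))\big),
\]
\[
\frac{\mathrm{d}}{\mathrm{d}t}I_2(x_t)=\left(\frac{\partial V}{\partial\x}\right)^T\big(f(x(t),x(t))-f(x(t),x(t-h))\big)+R_1+R_2,
\]
\[
\frac{\mathrm{d}}{\mathrm{d}t}I_3(x_t)=(\mathrm{w}_1+h\mathrm{w}_2)\|x(t)\|_{r,p}^{\gamma+\mu}-\mathrm{w}_1\|x(t-h)\|_{r,p}^{\gamma+\mu}-\mathrm{w}_2\int_{-h}^{0}\|x(t+\theta)\|_{r,p}^{\gamma+\mu}\,d\theta,
\]
where $R_1$ arises from differentiating the factor $\partial V/\partial\x$ evaluated at $x(t)$ and $R_2$ from differentiating the inner occurrence of $x(t)$ in $f$ under the integral:
\[
R_1=f(x(t),x(t-h))^T\frac{\partial^2 V}{\partial\x^2}\int_{t-h}^{t}f(x(t),x(s))\,ds,\qquad R_2=\left(\frac{\partial V}{\partial\x}\right)^T\left(\int_{t-h}^{t}\frac{\partial f}{\partial\x}(x(t),x(s))\,ds\right)f(x(t),x(t-h)).
\]

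The cancellation I would use is that the term $\left(\frac{\partial V}{\partial\x}\right)^T\big(f(x(t),x(t-h))-f(x(t),x(t))\big)$ in $\frac{\mathrm{d}}{\mathrm{d}t}I_1(x_t)$ is exactly annihilated by its counterpart in $\frac{\mathrm{d}}{\mathrm{d}t}I_2(x_t)$. Applying \eqref{eq:bound_dot_V} to the surviving term, $\left(\frac{\partial V}{\partial\x}\right)^T f(x(t),x(t))\le -\mathrm{w}\|x(t)\|_{r,p}^{\gamma+\mu}$, and recalling that $\mathrm{w}_0=\mathrm{w}-\mathrm{w}_1-h\mathrm{w}_2>0$, summation then yields
\[
\frac{\mathrm{d}}{\mathrm{d}t}v(x_t)\le -\mathrm{w}_0\|x(t)\|_{r,p}^{\gamma+\mu}-\mathrm{w}_1\|x(t-h)\|_{r,p}^{\gamma+\mu}-\mathrm{w}_2\int_{-h}^{0}\|x(t+\theta)\|_{r,p}^{\gamma+\mu}\,d\theta+|R_1|+|R_2|.
\]

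It remains to bound $|R_1|$ and $|R_2|$, which is where the \emph{real} work lies. I would expand them componentwise and estimate each factor via \eqref{eq:bound_dV}, Lemma~\ref{lemma:bound_f_i} and Lemma~\ref{lemma:bound_der_f_ij}; in the last one the sign of $\mu+r_i-r_j$ must be distinguished, and when $\mu+r_i-r_j<0$ one bounds $1/\big(\|x(t)\|_{r,p}^{q}+\|x(s)\|_{r,p}^{q}\big)$, with $q=r_j-r_i-\mu>0$, by $\|x(t)\|_{r,p}^{-q}$ — the choice that keeps the exponents controllable (the degenerate case $\mu+r_i-r_j=0$, not covered by Lemma~\ref{lemma:bound_der_f_ij}, is even simpler, since then $\partial f_i/\partial\x_j$ is homogeneous of degree zero and hence bounded). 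Since every resulting summand is a product of homogeneous factors whose degrees add up to $\gamma+2\mu$, and since $\gamma\ge 2\max_i r_i$ forces the total power of $\|x(t)\|_{r,p}$ in each summand to be nonnegative, a repeated use of the inequality $A^{p_1}B^{p_2}\le A^{p_1+p_2}+B^{p_1+p_2}$ — skewed by a scaling parameter $\chi>0$ exactly as in the proof of Lemma~\ref{lemma:lower_bound} — majorises each summand by a constant multiple of $\|x(t)\|_{r,p}^{\gamma+2\mu}+\|x(t-h)\|_{r,p}^{\gamma+2\mu}+\int_{-h}^{0}\|x(t+\theta)\|_{r,p}^{\gamma+2\mu}\,d\theta$. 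Writing $\gamma+2\mu=(\gamma+\mu)+\mu$ and invoking $\|x_t\|_\mathscr{H}\le\delta$, I obtain $|R_1|+|R_2|\le K\delta^{\mu}\big(\|x(t)\|_{r,p}^{\gamma+\mu}+\|x(t-h)\|_{r,p}^{\gamma+\mu}+\int_{-h}^{0}\|x(t+\theta)\|_{r,p}^{\gamma+\mu}\,d\theta\big)$ for some constant $K>0$ depending on the fixed $\chi$. Substituting into the preceding inequality proves \eqref{eq:derivative} with $c_0=\mathrm{w}_0-K\delta^{\mu}$, $c_1=\mathrm{w}_1-K\delta^{\mu}$ and $c_2=\mathrm{w}_2-K\delta^{\mu}$, which are all positive as soon as $\delta<\big(\min\{\mathrm{w}_0,\mathrm{w}_1,\mathrm{w}_2\}/K\big)^{1/\mu}$.

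I expect the main obstacle to be this final step — the componentwise accounting for $R_1$ and $R_2$, in particular the singular case $\mu+r_i-r_j<0$ of Lemma~\ref{lemma:bound_der_f_ij}, where one must pick the right term of the denominator estimate so that no uncontrolled negative power of a homogeneous norm survives, and the careful routing of the triple products (a point value at $t$, a point value at $t-h$, and an integral in $s$) into the three terms of \eqref{eq:derivative}. A minor technical point is the justification of the Leibniz differentiation of $I_2$ and $I_3$, which rests on the regularity assumed in Assumption~\ref{as:hom_degree} and on $x(\cdot)$ being continuously differentiable for $t>0$.
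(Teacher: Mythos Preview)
Your proposal is correct and follows essentially the same route as the paper: the same term-by-term differentiation and cancellation producing the three negative definite pieces plus two remainder terms (your $R_1,R_2$ are the paper's $\Lambda_2,\Lambda_1$), which are then bounded componentwise via Lemma~\ref{lemma:bound_f_i}, Lemma~\ref{lemma:bound_der_f_ij} and \eqref{eq:bound_dV} to order $\gamma+2\mu$ and absorbed using $\|x_t\|_\mathscr{H}\le\delta$. The only cosmetic differences are that the paper uses the three-factor inequality $A^{p_1}B^{p_2}C^{p_3}\le A^{p_1+p_2+p_3}+B^{p_1+p_2+p_3}+C^{p_1+p_2+p_3}$ directly (without your scaling parameter $\chi$), and in the case $\mu+r_i-r_j<0$ it simply bounds $|\partial f_i/\partial\x_j|\le\eta_{ij}$---which under Assumption~\ref{as:hom_degree} is legitimate since a continuous homogeneous function of negative degree must vanish identically---arriving at the explicit constants $c_0=\mathrm{w}_0-4hL\delta^\mu$, $c_1=\mathrm{w}_1-2hL\delta^\mu$, $c_2=\mathrm{w}_2-2L\delta^\mu$.
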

\begin{proof}
Similarly to the case of standard dilation \cite{alexandrova2019lyapunov}, we differentiate the functional along the solutions of system \eqref{eq:time-delay_system}:
\begin{gather*}
\begin{split}
    \frac{\mathrm{d}v(x_t)}{\mathrm{d}t} 
    &= -\mathrm{w}_0\|x(t)\|_{r,p}^{\gamma+\mu}-\mathrm{w}_1\|x(t-h)\|_{r,p}^{\gamma+\mu}\\
    &-\mathrm{w}_2\int_{-h}^{0}\|x(t+\theta)\|_{r,p}^{\gamma+\mu}d\theta + \sum_{j=1}^{2}\Lambda_j,\quad\text{where}
    \end{split}\\
    \begin{split}
    \Lambda_1&=\sum_{i,j=1}^n \left.\frac{\partial V(\x)}{\partial \x_i} \right|_{\x=x(t)}\left.\int_{t-h}^{t}\frac{\partial f_i(\x,x(s))}{\partial \x_j}\right|_{\x=x(t)} \mathrm{d}s\\
&\times f_j(x(t),x(t-h)),\quad
\Lambda_2=\sum_{i,j=1}^n f_i(x(t),x(t-h))\\&\times\left.\left(\frac{\partial^2 V(\x)}{\partial \x_i\x_j}\right)\right|_{\x=x(t)}\int_{-h}^{0}f_j(x(t),x(t+\theta))\mathrm{d}\theta.
\end{split}
\end{gather*}
Next, we estimate $\Lambda_1$ and $\Lambda_2$ with the help of Lemmas~\ref{lemma:bound_f_i}, \ref{lemma:bound_der_f_ij} and inequalities \eqref{eq:bound_dV}. We introduce the sets of indices
\begin{align*}
    R_1&=\{(i,j):\;i,j=\overline{1,n},\;\mu+r_i-r_j\geq 0\},\\
    R_2&=\{(i,j):\;i,j=\overline{1,n},\;\mu+r_i-r_j< 0\}
\end{align*}
for the estimation of $\Lambda_1.$ Notice that Lemma~\ref{lemma:bound_der_f_ij} implies that
$$
\left|\frac{\partial f_i(\x,\y)}{\partial \x_j}\right|\leq \eta_{ij},\quad (i,j)\in R_2,
$$
hence,
\begin{gather*}
 \Lambda_1\leq \!\!\!\sum_{(i,j)\in R_1}\!\!\!\beta_i m_j \|x(t)\|_{r,p}^{\gamma-r_i}(\|x(t)\|_{r,p}^{\mu+r_j}+\|x(t-h)\|_{r,p}^{\mu+r_j})\\
 \times\int_{-h}^{0}\eta_{ij}(\|x(t)\|_{r,p}^{\mu+r_i-r_j}+\|x(t+\theta)\|_{r,p}^{\mu+r_i-r_j})d\theta\\
 +\!\!\!\!\!\sum_{(i,j)\in R_2}\!\!\!\!\! h\beta_i m_j \eta_{ij} \|x(t)\|_{r,p}^{\gamma-r_i}(\|x(t)\|_{r,p}^{\mu+r_j}+\|x(t-h)\|_{r,p}^{\mu+r_j}),\\
 \Lambda_2\leq \sum_{i=1}^n\sum_{j=1}^n m_i m_j \psi_{ij}(\|x(t)\|_{r,p}^{\mu+r_i}+\|x(t-h)\|_{r,p}^{\mu+r_i})
 \end{gather*}
 \begin{gather*}
 \times\|x(t)\|_{r,p}^{\gamma-r_i-r_j}\int_{-h}^{0}(\|x(t)\|_{r,p}^{\mu+r_j}+\|x(t+\theta)\|_{r,p}^{\mu+r_j})d\theta.
 \end{gather*}
Using the standard inequality $$A^{p_1} B^{p_2} C^{p_3}\leq A^{p_1+p_2+p_3}+B^{p_1+p_2+p_3}+C^{p_1+p_2+p_3},$$ where $p_1,p_2,p_3\geq0$ and $A,B,C\geq0$, and defining
\begin{align*}
s_{ij}&=\left\{
\begin{aligned}
&1,&\quad &(i,j)\in R_1,\\
&\delta^{r_j-r_i-\mu},&\quad &(i,j)\in R_2,
\end{aligned}\right.\\
L&=\sum_{i=1}^n\sum_{j=1}^n m_j\left(\beta_i\eta_{ij}s_{ij}+m_i\psi_{ij}\right),\\
g(x_t)&=4h\|x(t)\|_{r,p}^{\gamma+2\mu}+2h\|x(t-h)\|_{r,p}^{\gamma+2\mu}\\
&+2\int_{-h}^{0}\|x(t+\theta)\|_{r,p}^{\gamma+2\mu},
\end{align*}
we arrive at
$
\Lambda_1+\Lambda_2 \leq L g(x_t).
$
Since $\mu>0,$ we obtain bound \eqref{eq:derivative} with
\begin{align*}
    c_0=\mathrm{w}_0-4hL\delta^\mu,\;\, c_1=\mathrm{w}_1-2hL\delta^\mu,\;\,c_2=\mathrm{w}_2-2L\delta^\mu.
\end{align*}
It is enough to choose
\begin{equation} \label{H2}
\delta<H_2=\left(\min\left\{\frac{\mathrm{w}_0}{4hL},\frac{\mathrm{w}_1}{2hL},\frac{\mathrm{w}_2}{2L}\right\}\right)^{1/\mu}
\end{equation}
to end the proof.
\end{proof}

\begin{lemma}
\label{lemma:upper_bound}
There exist $b_1,b_2>0$ such that
\begin{equation} \label{eq:upper_bound}
    v(\varphi)\leq b_1\|\varphi(0)\|_{r,p}^{\gamma}+b_2\int_{-h}^{0}\|\varphi(\theta)\|_{r,p}^{\gamma}d\theta,
\end{equation}
if $\|\varphi\|_\mathscr{H}\leq \delta.$
\end{lemma}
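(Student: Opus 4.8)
\textbf{Proof plan for Lemma~\ref{lemma:upper_bound}.} The plan is to bound each of the three summands $I_1$, $I_2$, $I_3$ of the functional \eqref{eq:functional} from above separately, exactly paralleling the lower-bound argument of Lemma~\ref{lemma:lower_bound} but now keeping everything in terms of powers $\gamma$ rather than $\gamma+\mu$, which is possible because of the restriction $\|\varphi\|_\mathscr{H}\leq\delta$ that lets us trade a factor $\|\cdot\|_{r,p}^\mu$ for $\delta^\mu$. First, the term $I_1(\varphi)=V(\varphi(0))$ is immediately controlled by the upper estimate in \eqref{eq:bound_V}: $I_1(\varphi)\leq\alpha_1\|\varphi(0)\|_{r,p}^\gamma$. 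Next, the integral term $I_3(\varphi)$ satisfies $I_3(\varphi)\leq(\mathrm{w}_1+h\mathrm{w}_2)\int_{-h}^{0}\|\varphi(\theta)\|_{r,p}^{\gamma+\mu}d\theta\leq(\mathrm{w}_1+h\mathrm{w}_2)\delta^\mu\int_{-h}^{0}\|\varphi(\theta)\|_{r,p}^{\gamma}d\theta$, using $\|\varphi(\theta)\|_{r,p}\leq\|\varphi\|_\mathscr{H}\leq\delta$ on the extra factor $\|\varphi(\theta)\|_{r,p}^\mu$.

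The cross term $I_2(\varphi)$ is the one requiring the real work, and it is handled just as in Lemma~\ref{lemma:lower_bound}: apply Lemma~\ref{lemma:bound_f_i} to $|f_i(\varphi(0),\varphi(\theta))|$ and the first bound of \eqref{eq:bound_dV} to $|\partial V/\partial\x_i|$, giving
\begin{align*}
|I_2(\varphi)|&\leq\sum_{i=1}^n\beta_i m_i\|\varphi(0)\|_{r,p}^{\gamma-r_i}\int_{-h}^{0}\left(\|\varphi(0)\|_{r,p}^{\mu+r_i}+\|\varphi(\theta)\|_{r,p}^{\mu+r_i}\right)d\theta.
\end{align*}
The first part of the integrand produces $h\left(\sum_i\beta_i m_i\right)\|\varphi(0)\|_{r,p}^{\gamma+\mu}\leq h\left(\sum_i\beta_i m_i\right)\delta^\mu\|\varphi(0)\|_{r,p}^{\gamma}$. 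For the second part, I would insert the auxiliary parameter $\chi>0$ exactly as before, write $\|\varphi(0)\|_{r,p}^{\gamma-r_i}\|\varphi(\theta)\|_{r,p}^{\mu+r_i}$ as a product of $\chi$-scaled factors, and apply the elementary inequality $A^{p_1}B^{p_2}\leq A^{p_1+p_2}+B^{p_1+p_2}$ to split it into a $\|\varphi(0)\|_{r,p}^{\gamma+\mu}$ term and a $\|\varphi(\theta)\|_{r,p}^{\gamma+\mu}$ term; then on each of these I again extract a factor $\delta^\mu$ to drop the exponent from $\gamma+\mu$ to $\gamma$. Collecting terms gives the claim with
\begin{align*}
b_1&=\alpha_1+\left(h\sum_{i=1}^n\beta_i m_i\left(1+\chi^{-2(\mu+r_i)}\right)\right)\delta^\mu,\\
b_2&=(\mathrm{w}_1+h\mathrm{w}_2)\delta^\mu+\left(\sum_{i=1}^n\beta_i m_i\chi^{2(\gamma-r_i)}\right)\delta^\mu,
\end{align*}
so that $b_1,b_2>0$ automatically, with no further restriction on $\delta$ beyond $\delta>0$ (in contrast to Lemma~\ref{lemma:lower_bound}, where positivity of $a_1$ forced $\delta<H_1$).

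The only mild subtlety, and the step I would be most careful with, is the bookkeeping of the $\chi$-powers so that the same $\chi$ that was fixed in Lemma~\ref{lemma:lower_bound} (to make $a_2>0$) may be reused here; since here every $\chi$-dependent coefficient is merely added to a manifestly positive quantity, any admissible $\chi$ works, and no conflict arises. Hence the proof reduces entirely to the routine estimate above, and no new constant beyond those already introduced is needed.
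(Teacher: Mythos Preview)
Your proof is correct and follows essentially the same route as the paper: bound $I_1$ by \eqref{eq:bound_V}, bound $I_3$ trivially, and handle $I_2$ via Lemma~\ref{lemma:bound_f_i} and \eqref{eq:bound_dV} together with the splitting inequality $A^{p_1}B^{p_2}\leq A^{p_1+p_2}+B^{p_1+p_2}$, then peel off a factor $\delta^\mu$. The only difference is that the paper dispenses with the auxiliary parameter $\chi$ (equivalently, takes $\chi=1$), since for the \emph{upper} bound there is no positivity constraint to enforce; this yields the slightly cleaner constants $b_1=\alpha_1+2h\bigl(\sum_i\beta_i m_i\bigr)\delta^\mu$ and $b_2=\bigl(\mathrm{w}_1+h\mathrm{w}_2+\sum_i\beta_i m_i\bigr)\delta^\mu$, but your version is equally valid.
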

\begin{proof}
The bound is obtained straightforwardly with
\begin{align*}
    b_1&=\alpha_1+2h\left(\sum_{i=1}^n \beta_i m_i\right)\delta^\mu,\\ b_2&=\left(\mathrm{w}_1+h\mathrm{w}_2+\sum_{i=1}^n\beta_i m_i\right)\delta^{\mu}.
\end{align*}
\end{proof}
\begin{corollary}
Functional \eqref{eq:functional} admits an upper bound
\begin{gather}\label{eq:sec_upper_bound_v}
    v(\varphi)\leq \alpha_1\|\varphi(0)\|^\gamma + b_3 \|\varphi\|_\mathscr{H}^{\gamma+\mu},
\end{gather}
where $$b_3=\left(\mathrm{w}_1+h\mathrm{w}_2+2h\sum_{i=1}^n \beta_im_i\right)h.$$
\end{corollary}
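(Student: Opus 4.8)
The plan is to refine the estimate of Lemma~\ref{lemma:upper_bound} by re-examining the three summands $I_1,I_2,I_3$ and keeping track of the homogeneity degrees more carefully, so that the $\|\varphi(0)\|^\gamma$ term is not spoiled by unnecessary constants. First I would recall that $I_1(\varphi)=V(\varphi(0))\leq\alpha_1\|\varphi(0)\|_{r,p}^\gamma$ directly from the right-hand bound in \eqref{eq:bound_V}; this already produces the $\alpha_1\|\varphi(0)\|^\gamma$ term with no extra factors, which is the point of the corollary. For $I_3(\varphi)=\int_{-h}^{0}(\mathrm{w}_1+(h+\theta)\mathrm{w}_2)\|\varphi(\theta)\|_{r,p}^{\gamma+\mu}\,d\theta$, I would bound the weight $\mathrm{w}_1+(h+\theta)\mathrm{w}_2$ by $\mathrm{w}_1+h\mathrm{w}_2$ and each $\|\varphi(\theta)\|_{r,p}$ by $\|\varphi\|_\mathscr{H}$, so that $I_3(\varphi)\leq(\mathrm{w}_1+h\mathrm{w}_2)h\,\|\varphi\|_\mathscr{H}^{\gamma+\mu}$.

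The substantive step is $I_2$. Using Lemma~\ref{lemma:bound_f_i} to bound $|f_i(\varphi(0),\varphi(\theta))|\leq m_i(\|\varphi(0)\|_{r,p}^{\mu+r_i}+\|\varphi(\theta)\|_{r,p}^{\mu+r_i})$ and the first bound in \eqref{eq:bound_dV} for $|\partial V/\partial\x_i|\leq\beta_i\|\varphi(0)\|_{r,p}^{\gamma-r_i}$, I get
\[
|I_2(\varphi)|\leq\sum_{i=1}^n\beta_i m_i\int_{-h}^{0}\|\varphi(0)\|_{r,p}^{\gamma-r_i}\bigl(\|\varphi(0)\|_{r,p}^{\mu+r_i}+\|\varphi(\theta)\|_{r,p}^{\mu+r_i}\bigr)\,d\theta.
\]
The first part integrates to $h\sum_i\beta_i m_i\,\|\varphi(0)\|_{r,p}^{\gamma+\mu}$, and in the second part I bound $\|\varphi(0)\|_{r,p}^{\gamma-r_i}\|\varphi(\theta)\|_{r,p}^{\mu+r_i}\leq\|\varphi\|_\mathscr{H}^{\gamma+\mu}$ pointwise, giving another $h\sum_i\beta_i m_i\,\|\varphi\|_\mathscr{H}^{\gamma+\mu}$. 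Hence $|I_2(\varphi)|\leq 2h\sum_i\beta_i m_i\,\|\varphi\|_\mathscr{H}^{\gamma+\mu}$. Here I use that $\|\varphi(0)\|_{r,p}^{\gamma+\mu}\leq\|\varphi\|_\mathscr{H}^{\gamma+\mu}$ as well, folding that term into the homogeneous-norm bound rather than keeping it attached to $\|\varphi(0)\|^\gamma$ — this is the choice that makes $b_3$ come out exactly as claimed.

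Collecting the three pieces, $v(\varphi)=I_1+I_2+I_3\leq\alpha_1\|\varphi(0)\|^\gamma+\bigl(\mathrm{w}_1+h\mathrm{w}_2+2h\sum_i\beta_i m_i\bigr)h\,\|\varphi\|_\mathscr{H}^{\gamma+\mu}$, which is \eqref{eq:sec_upper_bound_v} with the stated $b_3$ (noting the paper writes $\|\varphi(0)\|^\gamma$ for what is really $\|\varphi(0)\|_{r,p}^\gamma$, consistent with its earlier usage). The only mild subtlety — and the step I would be most careful about — is the bookkeeping of which monomials in $\|\varphi(0)\|_{r,p}$ and $\|\varphi(\theta)\|_{r,p}$ get absorbed into the $\|\varphi\|_\mathscr{H}^{\gamma+\mu}$ term versus left as $\|\varphi(0)\|^\gamma$: the degrees $\gamma+\mu$ and $\gamma$ differ, so one cannot simply combine them, and the corollary is genuinely a statement valid for all $\varphi$ in the domain (no smallness of $\delta$ needed, unlike Lemma~\ref{lemma:upper_bound}), which is why the $\delta^\mu$ factors of Lemma~\ref{lemma:upper_bound} are replaced here by the explicit $\|\varphi\|_\mathscr{H}^{\gamma+\mu}$ dependence. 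No hard analysis is involved; it is a matter of choosing the homogeneity-degree split so that the leading $V(\varphi(0))$ term is preserved cleanly.
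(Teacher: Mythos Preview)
Your approach is correct and is precisely the straightforward derivation the paper has in mind (the Corollary is stated without proof, immediately after Lemma~\ref{lemma:upper_bound}): bound $I_1$ by \eqref{eq:bound_V}, bound $I_2$ componentwise via Lemma~\ref{lemma:bound_f_i} and \eqref{eq:bound_dV}, bound $I_3$ trivially, and absorb every mixed term into $\|\varphi\|_\mathscr{H}^{\gamma+\mu}$ so that the $V(\varphi(0))$ contribution survives with the clean coefficient $\alpha_1$. Your remark that no smallness restriction $\|\varphi\|_\mathscr{H}\leq\delta$ is needed here is also to the point.

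There is, however, an arithmetic slip in your final collection step. Your own estimates give $I_3(\varphi)\leq h(\mathrm{w}_1+h\mathrm{w}_2)\|\varphi\|_\mathscr{H}^{\gamma+\mu}$ and $|I_2(\varphi)|\leq 2h\bigl(\sum_i\beta_i m_i\bigr)\|\varphi\|_\mathscr{H}^{\gamma+\mu}$, which sum to
\[
h\Bigl(\mathrm{w}_1+h\mathrm{w}_2+2\sum_{i=1}^n\beta_i m_i\Bigr)\|\varphi\|_\mathscr{H}^{\gamma+\mu},
\]
whereas the stated $b_3$ carries $2h\sum_i\beta_i m_i$ inside the parentheses, i.e.\ an extra factor of $h$ on the last sum. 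So your computation does not reproduce the paper's constant verbatim; either the paper's $b_3$ contains a minor typo, or a slightly different (and, for $h>1$, looser) bounding of $I_2$ was used. This does not affect the validity of the bound you actually derive, only the claim in your last paragraph that it coincides exactly with the stated $b_3$.
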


Now, we present a lower bound for the functional $v(\varphi)$ on the set $S_{\alpha}.$ This bound  will be useful for the construction the of the estimates in Sections~IV and V.
\begin{lemma}
\label{lemma:lower_bound_set_S}
There exist $\delta>0$ and $\tilde{a}_1=\tilde{a}_1(\alpha)>0$ such that
\begin{equation} \label{eq:lower_bound_set_S}
    v(\varphi)\geq \tilde{a}_1\|\varphi(0)\|_{r,p}^\gamma+\mathrm{w}_1\int_{-h}^{0}\|\varphi(\theta)\|_{r,p}^{\gamma+\mu} d\theta,
\end{equation}
if $\varphi\in S_\alpha$ and $\|\varphi\|_\mathscr{H}\leq \delta.$
\end{lemma}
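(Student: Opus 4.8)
The plan is to proceed exactly as in the proof of Lemma~\ref{lemma:lower_bound}, keeping the elementary bounds $I_1(\varphi)\geq\alpha_0\|\varphi(0)\|_{r,p}^\gamma$ and $I_3(\varphi)\geq\mathrm{w}_1\int_{-h}^0\|\varphi(\theta)\|_{r,p}^{\gamma+\mu}d\theta$, but re-estimating the cross term $I_2(\varphi)$ more carefully, using the defining property of the set $S_\alpha$. First I would start from the same intermediate estimate obtained in Lemma~\ref{lemma:lower_bound},
\begin{equation*}
|I_2(\varphi)|\leq h\sum_{i=1}^n\beta_im_i\|\varphi(0)\|_{r,p}^{\gamma+\mu}+\sum_{i=1}^n\beta_im_i\|\varphi(0)\|_{r,p}^{\gamma-r_i}\int_{-h}^0\|\varphi(\theta)\|_{r,p}^{\mu+r_i}d\theta,
\end{equation*}
which follows from Lemma~\ref{lemma:bound_f_i} and the first bound in \eqref{eq:bound_dV}. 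The point of working on $S_\alpha$ is that now $\|\varphi(\theta)\|_{r,p}\leq\alpha\|\varphi(0)\|_{r,p}$ for all $\theta\in[-h,0]$, so every factor $\|\varphi(\theta)\|_{r,p}^{\mu+r_i}$ inside the integral can be replaced by $\alpha^{\mu+r_i}\|\varphi(0)\|_{r,p}^{\mu+r_i}$, and the integral term collapses into a multiple of $h\|\varphi(0)\|_{r,p}^{\gamma+\mu}$. Hence $|I_2(\varphi)|\leq h\bigl(\sum_{i=1}^n\beta_im_i(1+\alpha^{\mu+r_i})\bigr)\|\varphi(0)\|_{r,p}^{\gamma+\mu}$.

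Then, combining the three pieces with $I_2(\varphi)\geq-|I_2(\varphi)|$ and using $\|\varphi\|_\mathscr{H}\leq\delta$ to write $\|\varphi(0)\|_{r,p}^{\gamma+\mu}\leq\delta^\mu\|\varphi(0)\|_{r,p}^{\gamma}$, we arrive at \eqref{eq:lower_bound_set_S} with
\begin{equation*}
\tilde{a}_1=\tilde{a}_1(\alpha)=\alpha_0-h\Bigl(\sum_{i=1}^n\beta_im_i(1+\alpha^{\mu+r_i})\Bigr)\delta^\mu,
\end{equation*}
and the whole $I_3$ term is retained with its full coefficient $\mathrm{w}_1$ (no part of it is consumed to absorb the integral in $I_2$, unlike in Lemma~\ref{lemma:lower_bound}). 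Positivity of $\tilde{a}_1$ is guaranteed by choosing $\delta$ smaller than
\begin{equation*}
\widetilde{H}_1=\left(\frac{\alpha_0}{h\sum_{i=1}^n\beta_im_i(1+\alpha^{\mu+r_i})}\right)^{1/\mu},
\end{equation*}
which is an admissible restriction on the neighbourhood.

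The main point to get right—rather than a genuine obstacle—is the bookkeeping in the cross-term estimate: one must resist the temptation to split $\|\varphi(0)\|_{r,p}^{\gamma-r_i}\|\varphi(\theta)\|_{r,p}^{\mu+r_i}$ by the product inequality $A^{p_1}B^{p_2}\leq A^{p_1+p_2}+B^{p_1+p_2}$ as was done in Lemma~\ref{lemma:lower_bound}, since that would again produce a stand-alone $\int_{-h}^0\|\varphi(\theta)\|_{r,p}^{\gamma+\mu}d\theta$ term whose coefficient would have to be subtracted from $\mathrm{w}_1$. Exploiting $S_\alpha$ directly keeps the full $\mathrm{w}_1$, which is precisely why this lower bound is sharper and leads to the less conservative estimates advertised in Sections~IV and~V. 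A minor remark is that the value of $\delta$ appearing here must be taken as the minimum of $\widetilde{H}_1$ and the radii $H_1$, $H_2$ from the earlier lemmas if one wants all estimates to hold simultaneously; within this lemma alone, only $\delta<\widetilde{H}_1$ is needed.
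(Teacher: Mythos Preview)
Your proof is correct and follows exactly the paper's approach: bound $I_1$ and $I_3$ trivially, estimate $|I_2|$ via Lemma~\ref{lemma:bound_f_i} and \eqref{eq:bound_dV}, then use the $S_\alpha$ condition $\|\varphi(\theta)\|_{r,p}\leq\alpha\|\varphi(0)\|_{r,p}$ to collapse the integral into a multiple of $\|\varphi(0)\|_{r,p}^{\gamma+\mu}$, arriving at the same constant $\tilde{a}_1=\alpha_0-h\bigl(\sum_{i=1}^n(1+\alpha^{\mu+r_i})m_i\beta_i\bigr)\delta^\mu$. Your threshold $\widetilde{H}_1$ is precisely the paper's $H_3$.
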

\begin{proof}
Using $\|\varphi(\theta)\|_{r,p}\leq \alpha\|\varphi(0)\|_{r,p},$ \mbox{$\theta\in[-h,0],$} for the estimation of the second summand, we obtain that bound \eqref{eq:lower_bound_set_S} is satisfied with
\begin{gather}
\tilde{a}_1=\alpha_0-h\left(\sum_{i=1}^n (1+\alpha^{\mu+r_i})m_i\beta_i\right) \delta^\mu,\quad\text{where} \notag
\\
 \label{H3}
    \delta<H_3=\left(\frac{\alpha_0}{h\sum_{i=1}^n (1+\alpha^{\mu+r_i})m_i\beta_i}\right)^{1/\mu}.
\end{gather}
\end{proof}

\section{Estimates for the Attraction Region}
Lemmas~\ref{lemma:lower_bound}--\ref{lemma:lower_bound_set_S} allow us to present straightforwardly estimates of the domain of attraction of the trivial solution of system \eqref{eq:time-delay_system}. The proofs are very similar to that in \cite{alexandrova2019lyapunov} (see Corollary~10 and Remark~11). The estimates differ in the lower bound for the functional used: bound \eqref{eq:lower_bound} in Theorem~\ref{thm:attr_region_LK_w} and bound \eqref{eq:lower_bound_set_S} in Theorem~\ref{thm:attr_region_LK_set_S}.

\begin{theorem}\label{thm:attr_region_LK_w}
Let $\Delta$ be a positive root of equation
\begin{equation*}
\alpha_1\Delta^\gamma + b_3 \Delta^{\gamma+\mu} = a_1\delta^\gamma,
\end{equation*}
where $\delta<\min\{H_1,H_2\},$ and $H_1$ and $H_2$ are defined by \eqref{H1} and \eqref{H2}, respectively.
If system \eqref{eq:delay-free_system} is asymptotically stable, then the set of initial functions
\begin{equation*} 
\Omega=\{\varphi\in C([-h,0],\mathbb{R}^n):\|\varphi\|_\mathscr{H}<\Delta\},  \end{equation*}
estimates the attraction region of the trivial solution of \eqref{eq:time-delay_system}.
\end{theorem}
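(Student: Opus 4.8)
The plan is to run the standard Lyapunov-Krasovskii argument, using the functional $v$ of \eqref{eq:functional} as the certificate, with the caveat that all of the bounds established in Lemmas~\ref{lemma:lower_bound}--\ref{lemma:upper_bound} are only valid in the neighbourhood $\|\varphi\|_\mathscr{H}\le\delta$. Accordingly, the first step is to fix $\delta<\min\{H_1,H_2\}$ so that simultaneously the lower bound \eqref{eq:lower_bound} (with $a_1>0$) and the derivative bound \eqref{eq:derivative} (with $c_0,c_1,c_2>0$) hold on the ball $\|\varphi\|_\mathscr{H}\le\delta$. Then I would take any initial function $\varphi$ with $\|\varphi\|_\mathscr{H}<\Delta$, where $\Delta$ is the positive root of $\alpha_1\Delta^\gamma+b_3\Delta^{\gamma+\mu}=a_1\delta^\gamma$; note $\Delta<\delta$ because the right-hand side equals $a_1\delta^\gamma$ while the left-hand side at $\delta$ exceeds $\alpha_1\delta^\gamma\ge a_1\delta^\gamma$ (since $a_1\le\alpha_0\le\alpha_1$). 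The existence and uniqueness of such a root follows because the left-hand side is continuous, strictly increasing, zero at $0$, and exceeds $a_1\delta^\gamma$ at $\delta$.

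The core of the argument is an invariance/trapping claim: if $x_t(\varphi)$ stays in the ball $\|\cdot\|_\mathscr{H}\le\delta$ on some interval, then along that interval $v(x_t)$ is nonincreasing by Lemma~\ref{lemma:derivative}, hence $v(x_t)\le v(\varphi)$. Using the sharper upper bound \eqref{eq:sec_upper_bound_v} on the initial slice, $v(\varphi)\le \alpha_1\|\varphi(0)\|_{r,p}^\gamma+b_3\|\varphi\|_\mathscr{H}^{\gamma+\mu}\le \alpha_1\Delta^\gamma+b_3\Delta^{\gamma+\mu}=a_1\delta^\gamma$. On the other hand, the lower bound \eqref{eq:lower_bound} gives $v(x_t)\ge a_1\|x(t)\|_{r,p}^\gamma$, so $\|x(t)\|_{r,p}^\gamma\le\delta^\gamma$, i.e.\ $\|x(t)\|_{r,p}\le\delta$ for all such $t$. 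A standard continuity-in-$t$ argument (the solution is continuous, starts strictly inside the ball, and can never reach the boundary value $\|x(t)\|_{r,p}=\delta$ without first violating the bound just derived) then shows that in fact $x_t(\varphi)$ never leaves the ball: the solution is defined for all $t\ge0$ and $\|x_t(\varphi)\|_\mathscr{H}\le\delta$ for all $t\ge0$. This is the step I expect to require the most care — one has to make sure the ``for all $t$ on the interval'' estimate is bootstrapped correctly and that the homogeneous norm $\|\cdot\|_{r,p}$, which is only equivalent to the Euclidean norm rather than equal to it, does not cause trouble; here it helps that $\|x_t\|_\mathscr{H}$ is defined directly via $\|\cdot\|_{r,p}$, so the estimates close without any comparison constant.

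Having established forward completeness and boundedness inside the ball, asymptotic stability of the trivial solution follows from the classical Lyapunov-Krasovskii theorem \cite{hale2013introduction}: $v$ admits the positive-definite lower bound \eqref{eq:lower_bound} and the upper bound \eqref{eq:upper_bound}, both in terms of $\|\varphi\|_\mathscr{H}$, and $\dot v(x_t)\le -c_0\|x(t)\|_{r,p}^{\gamma+\mu}$ is negative definite in $x(t)$; this yields $x(t,\varphi)\to0$. Since this holds for every $\varphi$ with $\|\varphi\|_\mathscr{H}<\Delta$, the set $\Omega$ is contained in the attraction region, which is the assertion. I would also remark that Assumption~\ref{as:delay_free_asymp} (equivalently, ``system \eqref{eq:delay-free_system} is asymptotically stable'') is exactly what guarantees the existence of the homogeneous Lyapunov function $V$ with property \eqref{eq:bound_dot_V}, and hence all the constants $\alpha_0,\alpha_1,\beta_i,\psi_{ij},\mathrm{w}$ underlying $a_1,b_3,\delta$; without it the functional $v$ is not even well-defined in the required sense.
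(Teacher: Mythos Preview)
Your proposal is correct and follows precisely the classical Lyapunov--Krasovskii invariance argument that the paper invokes by reference to \cite{alexandrova2019lyapunov} (Corollary~10 and Remark~11): trap the trajectory in the $\delta$-ball via the upper bound \eqref{eq:sec_upper_bound_v}, the monotonicity from Lemma~\ref{lemma:derivative}, and the lower bound \eqref{eq:lower_bound}, then conclude attraction from the standard LK theorem. The paper omits the details, but your write-up matches the intended proof, including the observation that $\Delta<\delta$ because $a_1<\alpha_0\le\alpha_1$.
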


\begin{theorem}\label{thm:attr_region_LK_set_S}
Let $\Delta_\alpha$ be a positive root of equation
\begin{equation*}
\alpha_1\Delta_\alpha^\gamma + b_3 \Delta_\alpha^{\gamma+\mu} = \tilde{a}_1\delta^\gamma,
\end{equation*}
where $\delta<\min\{H_2,H_3\},$ and $H_3$ is defined by \eqref{H3}.
If system \eqref{eq:delay-free_system} is asymptotically stable, then the set of initial functions
\begin{equation*} 
\Omega_\alpha=\{\varphi\in C([-h,0],\mathbb{R}^n):\|\varphi\|_\mathscr{H}<\Delta_\alpha\},  
\end{equation*}
estimates the attraction region of the trivial solution of \eqref{eq:time-delay_system}.
\end{theorem}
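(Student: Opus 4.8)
The plan is to run the standard Lyapunov--Krasovskii invariance argument (as for Theorem~\ref{thm:attr_region_LK_w}), but to replace the generic lower bound \eqref{eq:lower_bound} by the sharper $S_\alpha$-bound \eqref{eq:lower_bound_set_S} precisely at the instant when a solution segment would first touch the boundary of the ball $\|\varphi\|_\mathscr{H}\le\delta$. Fix $\varphi\in\Omega_\alpha$ and let $x(t,\varphi)$ be the corresponding solution, defined on its maximal interval $[0,T)$. First I would combine the upper bound \eqref{eq:sec_upper_bound_v} with $\|\varphi(0)\|_{r,p}\le\|\varphi\|_\mathscr{H}<\Delta_\alpha$ and the strict monotonicity of $\Delta\mapsto\alpha_1\Delta^\gamma+b_3\Delta^{\gamma+\mu}$ to obtain the \emph{strict} inequality $v(\varphi)<\tilde{a}_1\delta^\gamma$; since $\tilde{a}_1<\alpha_0\le\alpha_1$, the same computation gives $\Delta_\alpha<\delta$, so $\|x_0\|_\mathscr{H}<\delta$.

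The heart of the argument is to show that $\|x_t\|_\mathscr{H}<\delta$ for every $t\in[0,T)$. I would argue by contradiction: let $t_1\in(0,T)$ be the first time with $\|x_{t_1}\|_\mathscr{H}=\delta$, so that $\|x_t\|_\mathscr{H}\le\delta$ on $[0,t_1]$ and, by Lemma~\ref{lemma:derivative}, $t\mapsto v(x_t)$ is non-increasing there. Choosing $\theta^\ast\in[-h,0]$ with $\|x(t_1+\theta^\ast)\|_{r,p}=\delta$ and using $\|\varphi\|_\mathscr{H}<\delta$ (which rules out the segment touching the boundary through the initial data), the instant $t_2:=t_1+\theta^\ast$ must lie in $(0,t_1]$. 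Since $t_2\le t_1$, one has $\|x(t_2+\theta)\|_{r,p}\le\delta=\|x(t_2)\|_{r,p}<\alpha\|x(t_2)\|_{r,p}$ for all $\theta\in[-h,0]$, i.e.\ $x_{t_2}\in S_\alpha$ with $\|x_{t_2}\|_\mathscr{H}=\delta$; hence Lemma~\ref{lemma:lower_bound_set_S} yields $v(x_{t_2})\ge\tilde{a}_1\|x(t_2)\|_{r,p}^\gamma=\tilde{a}_1\delta^\gamma$. But monotonicity of $v$ on $[0,t_2]$ gives $v(x_{t_2})\le v(\varphi)<\tilde{a}_1\delta^\gamma$, a contradiction. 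Consequently $\|x_t\|_\mathscr{H}<\delta$ on $[0,T)$, $x(t)$ remains bounded, and therefore $T=\infty$.

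Finally I would deduce convergence to the origin in the usual way. From Lemma~\ref{lemma:derivative}, $\dot v(x_t)\le-c_0\|x(t)\|_{r,p}^{\gamma+\mu}$, so $c_0\int_0^t\|x(s)\|_{r,p}^{\gamma+\mu}\,ds\le v(\varphi)-v(x_t)$; since $v$ is bounded below on $\{\|\varphi\|_\mathscr{H}\le\delta\}$ (there $I_1,I_3\ge0$ and $|I_2|$ is bounded), the integral $\int_0^\infty\|x(s)\|_{r,p}^{\gamma+\mu}\,ds$ is finite. Boundedness of $x$ makes $\dot x=f(x(t),x(t-h))$ bounded through Lemma~\ref{lemma:bound_f_i}, hence $x$ is uniformly continuous, and Barbalat's lemma forces $\|x(t)\|_{r,p}\to0$; thus every solution issued from $\Omega_\alpha$ tends to the trivial one, which is the assertion of the theorem.

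The step I expect to be the main obstacle is the critical-instant analysis in the second paragraph: one has to verify that the sub-segment realizing $\|x_{t_1}\|_\mathscr{H}=\delta$ actually belongs to $S_\alpha$, so that the sharper bound \eqref{eq:lower_bound_set_S} — not the weaker \eqref{eq:lower_bound} — can be applied, and one must carry the strict inequality $v(\varphi)<\tilde{a}_1\delta^\gamma$ through in order to close the contradiction. This is exactly the Razumikhin ingredient that makes $\Delta_\alpha$ larger than the $\Delta$ of Theorem~\ref{thm:attr_region_LK_w}.
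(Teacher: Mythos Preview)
Your proposal is correct and matches the approach the paper indicates: the paper omits the proof, stating only that it follows \cite{alexandrova2019lyapunov} (Corollary~10 and Remark~11) with the sole modification that the $S_\alpha$-lower bound \eqref{eq:lower_bound_set_S} replaces \eqref{eq:lower_bound}, which is precisely the first-escape-time contradiction you carry out. Your observation that at the first instant $t_1$ with $\|x_{t_1}\|_\mathscr{H}=\delta$ the maximum is attained at $\theta=0$ (so $x_{t_1}\in S_\alpha$ automatically) is exactly the Razumikhin-type ingredient the paper has in mind, and your Barbalat conclusion is the standard way to finish.
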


\begin{remark}
Proofs of Theorem~\ref{thm:attr_region_LK_w} and Theorem~\ref{thm:attr_region_LK_set_S} imply that $\|x(t,\varphi)\|_{r,p}<\delta$ $\,\forall\, t\geq 0,$ if $\|\varphi\|_\mathscr{H}<\Delta$ or $\|\varphi\|_\mathscr{H}<\Delta_\alpha.$
\end{remark}

\section{Estimates for the solutions}
For the standard dilation, the estimates for the solutions obtained with the help of the Lyapunov-Krasovskii functional \eqref{eq:functional} are presented in \cite{portilla2020comparison}. We extend straightforwardly this result to the case of weighted dilation in Section~\ref{subs_classical}. A novel approach which combines the use of functional \eqref{eq:functional} with ideas of the Razumikhin framework is presented in Section~\ref{subs_via_set_S}.

\subsection{Classical Approach}
\label{subs_classical}
Bounds \eqref{eq:derivative} and \eqref{eq:upper_bound} imply that if $\|x_t\|_\mathscr{H}\leq \delta,$ $t\geq0,$ then
\begin{align}
\label{eq:sec_bound_dot_v}
\frac{dv(x_t)}{dt}&\leq -c\left(\|x(t)\|_{r,p}^{\gamma+\mu}+\int_{-h}^{0}\|x(t+\theta)\|_{r,p}^{\gamma+\mu}d\theta\right),\\
\label{eq:first_upper_bound_v}
    v(x_t)&\leq b\left(\|x(t)\|_{r,p}^{\gamma}+\int_{-h}^{0}\|x(t+\theta)\|_{r,p}^{\gamma}d\theta\right),
\end{align}
where $c=\min\{c_0,c_2\},$ $b=\max\{b_1,b_2\},$ \mbox{$\delta<\min\{H_1,H_2\}.$} Define the values
$$
\rho_1=\bigl(2\max\{1,h\}\bigr)^{\frac{\mu}{\gamma}},\quad\rho_2=\frac{c}{\rho_1 b^{\frac{\gamma+\mu}{\gamma}}}.
$$
The following relation was established in \cite{portilla2020comparison} on the basis of Hölder's inequality:
\begin{multline} \label{inqx_w}
    \left(\|x(t)\|_{r,p}^{\gamma}+\int_{-h}^{0}\|x(t+\theta)\|_{r,p}^{\gamma}d\theta\right)^{\frac{\gamma+\mu}{\gamma}}\\ \leq  \rho_1\left(\|x(t)\|_{r,p}^{\gamma+\mu}+\int_{-h}^{0}\|x(t+\theta)\|_{r,p}^{\gamma+\mu}d\theta\right).
\end{multline}
Combining \eqref{eq:sec_bound_dot_v}, \eqref{eq:first_upper_bound_v} and \eqref{inqx_w} gives the following
connection between functional \eqref{eq:functional} and its derivative.
\begin{lemma} \label{lemma:connect_dv_v_w}
The following inequality is satisfied:
\begin{equation}\label{eq:connect_v_dot_v} 
    \frac{dv(x_t)}{dt} \leq -\rho_2 v^{\frac{\gamma+\mu}{\gamma}}(x_t),\quad t\geq0, 
\end{equation}
along the solutions of system \eqref{eq:time-delay_system} with $\|x_t\|_\mathscr{H}\leq \delta.$
\end{lemma}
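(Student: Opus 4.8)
The plan is to combine the three ingredients that have just been assembled: the differential inequality for the derivative of the functional \eqref{eq:sec_bound_dot_v}, the upper bound for the functional \eqref{eq:first_upper_bound_v}, and the H\"older-type relation \eqref{inqx_w}. The strategy is to bound the quantity $\|x(t)\|_{r,p}^{\gamma+\mu}+\int_{-h}^{0}\|x(t+\theta)\|_{r,p}^{\gamma+\mu}d\theta$ appearing in \eqref{eq:sec_bound_dot_v} from below by a suitable power of $v(x_t)$, and then substitute.

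First I would raise both sides of \eqref{eq:first_upper_bound_v} to the power $(\gamma+\mu)/\gamma$, which is legitimate since both sides are nonnegative, to obtain
\begin{equation*}
v^{\frac{\gamma+\mu}{\gamma}}(x_t)\leq b^{\frac{\gamma+\mu}{\gamma}}\left(\|x(t)\|_{r,p}^{\gamma}+\int_{-h}^{0}\|x(t+\theta)\|_{r,p}^{\gamma}d\theta\right)^{\frac{\gamma+\mu}{\gamma}}.
\end{equation*}
Next I would apply \eqref{inqx_w} to the bracketed term on the right, which yields
\begin{equation*}
v^{\frac{\gamma+\mu}{\gamma}}(x_t)\leq b^{\frac{\gamma+\mu}{\gamma}}\rho_1\left(\|x(t)\|_{r,p}^{\gamma+\mu}+\int_{-h}^{0}\|x(t+\theta)\|_{r,p}^{\gamma+\mu}d\theta\right).
\end{equation*}
Rearranging gives a lower bound
\begin{equation*}
\|x(t)\|_{r,p}^{\gamma+\mu}+\int_{-h}^{0}\|x(t+\theta)\|_{r,p}^{\gamma+\mu}d\theta\geq \frac{1}{\rho_1 b^{\frac{\gamma+\mu}{\gamma}}}v^{\frac{\gamma+\mu}{\gamma}}(x_t).
\end{equation*}

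Finally I would plug this lower bound into \eqref{eq:sec_bound_dot_v}. Since the coefficient $c$ in \eqref{eq:sec_bound_dot_v} is positive and the inequality reverses the sign, one gets
\begin{equation*}
\frac{dv(x_t)}{dt}\leq -\frac{c}{\rho_1 b^{\frac{\gamma+\mu}{\gamma}}}v^{\frac{\gamma+\mu}{\gamma}}(x_t)=-\rho_2 v^{\frac{\gamma+\mu}{\gamma}}(x_t),
\end{equation*}
which is exactly \eqref{eq:connect_v_dot_v}, valid for $t\geq 0$ provided $\|x_t\|_\mathscr{H}\leq\delta$ so that \eqref{eq:sec_bound_dot_v} and \eqref{eq:first_upper_bound_v} hold. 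I do not anticipate a genuine obstacle here: the argument is a short chain of substitutions, and the only point requiring a word of care is that monotonicity of $t\mapsto t^{(\gamma+\mu)/\gamma}$ on $[0,\infty)$ and nonnegativity of $v(x_t)$ (guaranteed by Lemma~\ref{lemma:lower_bound} in the neighbourhood $\|x_t\|_\mathscr{H}\leq\delta$) are what license raising \eqref{eq:first_upper_bound_v} to a power and reversing the resulting inequality.
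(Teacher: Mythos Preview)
Your proof is correct and follows exactly the approach the paper indicates: the paper states only that the lemma is obtained by ``combining \eqref{eq:sec_bound_dot_v}, \eqref{eq:first_upper_bound_v} and \eqref{inqx_w},'' and your argument spells out precisely that chain of substitutions.
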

Considering the comparison equation \cite{khalil1996nonlinear} of the form
\begin{equation}\label{eq:equ_comparison}
    \frac{du(t)}{dt} = -\rho_2 u^{\frac{\gamma+\mu}{\gamma}}
(t),
\end{equation}
with initial condition
$
   u(0)=u_0=(\alpha_1 + b_3 \Delta^{\mu})\|\varphi\|_\mathscr{H}^\gamma
$
and exploiting the classical ideas, we arrive at the following estimates for solutions in the homogeneous norm.
\begin{theorem}\label{th:estimation}
Let Assumptions~\ref{as:hom_degree} and \ref{as:delay_free_asymp} hold. Then, the solutions of system \eqref{eq:time-delay_system} with initial functions with $\|\varphi\|_\mathscr{H}<\Delta,$ where $\Delta$ is defined in Theorem~\ref{thm:attr_region_LK_w}, admit an estimate of the form 
\begin{gather} \label{eq:estimate_final}
   \|x(t,\varphi)\|_{r,p} \leq \frac{\hat{c}_1 \|\varphi\|_\mathscr{H}}{\left(1+\hat{c}_2\|\varphi\|_\mathscr{H}^{\mu}t\right)^{1/\mu}},\quad\text{where}
\\
\begin{split}
\hat{c}_1&=\left(\frac{\alpha_1+b_3 \Delta^{\mu}}{a_1}\right)^\frac{1}{\gamma}=\frac{\delta}{\Delta},\notag
\\
\hat{c}_2&=\frac{c}{b}\left(\frac{\mu}{\gamma}\right)\left(\frac{\alpha_1+b_3 \Delta^{\mu}}{2b\max\{1,h\}}\right)^{\frac{\mu}{\gamma}}.\notag
\end{split}
\end{gather}
\end{theorem}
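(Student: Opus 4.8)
The plan is to integrate the differential inequality of Lemma~\ref{lemma:connect_dv_v_w} via the comparison principle and then to translate the resulting bound on the functional into a bound on $\|x(t,\varphi)\|_{r,p}$ using the upper and lower bounds for $v$. First I would invoke the Remark following Theorem~\ref{thm:attr_region_LK_set_S}: since $\|\varphi\|_\mathscr{H}<\Delta$, the solution stays in the region $\|x_t\|_\mathscr{H}\leq\delta$ for all $t\geq0$, so Lemma~\ref{lemma:connect_dv_v_w} applies along the whole trajectory. By the comparison lemma \cite{khalil1996nonlinear}, $v(x_t)\leq u(t)$ for all $t\geq0$, where $u$ solves \eqref{eq:equ_comparison} with $u(0)=u_0$. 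Here the choice $u_0=(\alpha_1+b_3\Delta^\mu)\|\varphi\|_\mathscr{H}^\gamma$ is justified by the Corollary: $v(\varphi)\leq\alpha_1\|\varphi(0)\|^\gamma+b_3\|\varphi\|_\mathscr{H}^{\gamma+\mu}\leq(\alpha_1+b_3\Delta^\mu)\|\varphi\|_\mathscr{H}^\gamma=u_0$, using $\|\varphi(0)\|\leq\|\varphi\|_\mathscr{H}$ and $\|\varphi\|_\mathscr{H}<\Delta$.

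Next I would solve \eqref{eq:equ_comparison} explicitly. Since the degree $(\gamma+\mu)/\gamma$ exceeds one, separating variables yields
\begin{equation*}
u(t)=\frac{u_0}{\left(1+\dfrac{\mu}{\gamma}\,\rho_2\,u_0^{\mu/\gamma}\,t\right)^{\gamma/\mu}}.
\end{equation*}
Then, using the lower bound \eqref{eq:lower_bound} of Lemma~\ref{lemma:lower_bound}, which gives $a_1\|x(t)\|_{r,p}^\gamma\leq v(x_t)\leq u(t)$, I obtain $\|x(t,\varphi)\|_{r,p}\leq(u(t)/a_1)^{1/\gamma}$. Substituting the closed form of $u(t)$ and simplifying produces
\begin{equation*}
\|x(t,\varphi)\|_{r,p}\leq\left(\frac{u_0}{a_1}\right)^{1/\gamma}\frac{1}{\left(1+\frac{\mu}{\gamma}\rho_2 u_0^{\mu/\gamma}t\right)^{1/\mu}}.
\end{equation*}
It remains to identify the constants: $(u_0/a_1)^{1/\gamma}=((\alpha_1+b_3\Delta^\mu)/a_1)^{1/\gamma}\|\varphi\|_\mathscr{H}=\hat c_1\|\varphi\|_\mathscr{H}$, and the equality $\hat c_1=\delta/\Delta$ follows from the defining equation for $\Delta$ in Theorem~\ref{thm:attr_region_LK_w} (dividing $\alpha_1\Delta^\gamma+b_3\Delta^{\gamma+\mu}=a_1\delta^\gamma$ by $a_1\Delta^\gamma$). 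For the denominator, $\frac{\mu}{\gamma}\rho_2 u_0^{\mu/\gamma}$ must be rewritten as $\hat c_2\|\varphi\|_\mathscr{H}^\mu$; unpacking $\rho_2=c/(\rho_1 b^{(\gamma+\mu)/\gamma})$ and $\rho_1=(2\max\{1,h\})^{\mu/\gamma}$ and $u_0^{\mu/\gamma}=(\alpha_1+b_3\Delta^\mu)^{\mu/\gamma}\|\varphi\|_\mathscr{H}^\mu$, the bookkeeping yields exactly
\begin{equation*}
\hat c_2=\frac{c}{b}\left(\frac{\mu}{\gamma}\right)\left(\frac{\alpha_1+b_3\Delta^\mu}{2b\max\{1,h\}}\right)^{\mu/\gamma}.
\end{equation*}

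The only genuine subtlety — the main obstacle, such as it is — is justifying the comparison step rigorously: one must check that $v(x_t)$ is locally absolutely continuous (or at least that its upper right Dini derivative satisfies \eqref{eq:connect_v_dot_v}) so that the comparison lemma applies, and one must confirm that the trajectory never leaves the neighbourhood $\|x_t\|_\mathscr{H}\leq\delta$ where that inequality is valid — this is precisely the content of the Remark after Theorem~\ref{thm:attr_region_LK_set_S}, so it can be cited rather than reproved. Everything else is the algebraic identification of constants, which is routine; I would present the closed-form solution of the comparison equation and then simply state that matching coefficients gives the claimed $\hat c_1$ and $\hat c_2$, relegating the arithmetic to the reader or a footnote.
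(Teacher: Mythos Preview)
Your proposal is correct and follows exactly the route the paper takes: the paper simply states that one considers the comparison equation \eqref{eq:equ_comparison} with $u_0=(\alpha_1+b_3\Delta^\mu)\|\varphi\|_\mathscr{H}^\gamma$ and ``exploiting the classical ideas'' arrives at \eqref{eq:estimate_final}, while the Remark after the theorem supplies the invariance $\|x_t\|_\mathscr{H}<\delta$ that you invoke at the start. Your write-up in fact fills in all the steps the paper leaves implicit --- the explicit solution of the comparison ODE, the use of the lower bound $a_1\|x(t)\|_{r,p}^\gamma\leq v(x_t)$, and the algebraic identification of $\hat c_1,\hat c_2$ --- so nothing is missing and nothing differs.
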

\begin{remark}
Note that $\|\varphi\|_\mathscr{H}<\Delta$ implies $\|x_t(\varphi)\|_\mathscr{H}<\delta$ for all $t\geq0$ according to Theorem~\ref{thm:attr_region_LK_w} thus making use of \eqref{eq:sec_bound_dot_v} and \eqref{eq:first_upper_bound_v} legal.
\end{remark}
\subsection{Novel Approach Using the Set $S_\alpha$}
\label{subs_via_set_S}
The lower bound (\ref{eq:lower_bound_set_S}) is expected to be less conservative than the original bound (\ref{eq:lower_bound}), since it should hold on the reduced set $S_{\alpha}$ instead of the set of all continuous functions.
Thus, a natural question appears: Can we replace the constant $a_1$  coming from the lower bound in \eqref{eq:estimate_final} with $\tilde{a}_1$? We give an affirmative answer to this question with some restrictions below.
Exploring the proof of Theorem~\ref{th:estimation}, one finds that the lower bound for the functional is used at the final step of the proof, namely, $$a_1\|x(t,\varphi)\|_{r,p}^\gamma\leq v(x_t)\leq u(t),$$
where $u(t)$ is the solution of \eqref{eq:equ_comparison}. It is crucial that the last formula is true for all solutions, not only those with each segment in $S_\alpha.$ A similar difficulty appears while constructing the estimates of solutions via the Razumikhin theorem \cite{aleksandrov2012asymptotic}. Here, we adapt the ideas of \cite{aleksandrov2012asymptotic} to reduce the conservatism of Theorem~\ref{th:estimation}.

We start with Lemma~\ref{lemma:connect_dv_v_w}. To overcome the mentioned difficulty, we take $\rho<\rho_2,$ and consider the comparison initial value problem
\begin{gather}
    \frac{du(t)}{dt} = -\rho u^{\frac{\gamma+\mu}{\gamma}} (t),\\
    u(0)=\tilde{u}_0=(\alpha_1 + b_3 \Delta_\alpha^{\mu})\|\varphi\|_\mathscr{H}^\gamma,
\end{gather}
which admits the solution
\begin{equation*}
    u(t) = \tilde{u}_0\left[1+\rho\left(\frac{\mu}{\gamma}\right)\tilde{u}_0^{\frac{\mu}{\gamma}}t\right]^{-\frac{\gamma}{\mu}}.
\end{equation*}
Now, we present a set of auxiliary results to extend the approach of \cite{aleksandrov2012asymptotic} to the Lyapunov-Krasovskii framework. In Lemma~\ref{lemma:add_restric}, a choice of $\rho$ is made. Such choice is always possible due to $\alpha>1.$   Theorem~\ref{thm_set_S_prelim} allows us to switch from the bound on the set $S_\alpha$ to the bound which holds for all solutions in a certain neighbourhood of the trivial one. The proofs are omitted due to length limitations.
\begin{lemma}\label{lemma:strict_comparison}
If $\|\varphi\|_\mathscr{H}<\Delta_\alpha,$ then
\begin{equation*}
v(x_t)<u(t),\quad t\geq 0.
\end{equation*}
\end{lemma}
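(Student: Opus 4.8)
The plan is to use a continuity/contradiction argument of the Razumikhin flavour, combined with the comparison inequality from Lemma~\ref{lemma:connect_dv_v_w}. First I would verify the strict inequality at $t=0$: by the upper bound \eqref{eq:sec_upper_bound_v} on the functional we have $v(\varphi)\le\alpha_1\|\varphi(0)\|^\gamma+b_3\|\varphi\|_\mathscr{H}^{\gamma+\mu}\le(\alpha_1+b_3\Delta_\alpha^\mu)\|\varphi\|_\mathscr{H}^\gamma=\tilde u_0=u(0)$, using $\|\varphi\|_\mathscr{H}<\Delta_\alpha$. To obtain \emph{strict} inequality at $t=0$ one either notes the genuine gap coming from $\|\varphi\|_\mathscr H<\Delta_\alpha$ (so $b_3\|\varphi\|_\mathscr H^{\gamma+\mu}<b_3\Delta_\alpha^\mu\|\varphi\|_\mathscr H^\gamma$ strictly when $\varphi(0)\neq0$; the case $\varphi\equiv0$ giving the trivial solution where the claim is immediate), or, if necessary, perturbs $u_0$ slightly.

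Next I would argue by contradiction: suppose the set $T=\{t\ge0 : v(x_t)\ge u(t)\}$ is nonempty, and let $t^\star=\inf T$. By continuity of $t\mapsto v(x_t)$ and $t\mapsto u(t)$ and the strict inequality at $0$, we have $t^\star>0$, $v(x_{t^\star})=u(t^\star)$, and $v(x_t)<u(t)$ for $t\in[0,t^\star)$. The key point is that on $[0,t^\star]$ the solution stays in the region where the estimates are valid: since $v(x_t)<u(t)\le\tilde u_0\le(\alpha_1+b_3\Delta_\alpha^\mu)\Delta_\alpha^\gamma$, and since on the set $S_\alpha$ the lower bound \eqref{eq:lower_bound_set_S} gives $\tilde a_1\|x(t)\|_{r,p}^\gamma\le v(x_t)$, one controls $\|x(t)\|_{r,p}$ and hence $\|x_t\|_\mathscr H\le\delta$ — this is exactly the content of the Remark following Theorem~\ref{thm:attr_region_LK_set_S} applied with $\Delta_\alpha$, so I would cite that. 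With $\|x_{t^\star}\|_\mathscr H\le\delta$, Lemma~\ref{lemma:connect_dv_v_w} yields $\dot v(x_{t^\star})\le-\rho_2 v^{(\gamma+\mu)/\gamma}(x_{t^\star})<-\rho\, v^{(\gamma+\mu)/\gamma}(x_{t^\star})=-\rho\, u^{(\gamma+\mu)/\gamma}(t^\star)=\dot u(t^\star)$, using $\rho<\rho_2$ and $v(x_{t^\star})=u(t^\star)>0$. Hence the difference $v(x_t)-u(t)$, which vanishes at $t^\star$ and is negative just before, has strictly negative derivative at $t^\star$, forcing it to be positive immediately before $t^\star$ — contradicting $v(x_t)<u(t)$ on $[0,t^\star)$.

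The main obstacle I anticipate is the \textbf{invariance step}: guaranteeing $\|x_t\|_\mathscr H\le\delta$ for all $t\in[0,t^\star]$ so that Lemma~\ref{lemma:connect_dv_v_w} applies, since that lemma is only asserted under $\|x_t\|_\mathscr H\le\delta$. This requires a bootstrapping argument coupling the bound $v(x_t)\le u(t)$ with the lower bound \eqref{eq:lower_bound_set_S} valid on $S_\alpha$, and with the definition of $\Delta_\alpha$ as the root of $\alpha_1\Delta_\alpha^\gamma+b_3\Delta_\alpha^{\gamma+\mu}=\tilde a_1\delta^\gamma$; one must also confirm that each segment $x_t$ actually lies in $S_\alpha$ on the relevant interval, which is presumably where the choice of $\rho$ in the (omitted) Lemma~\ref{lemma:add_restric} and the condition $\alpha>1$ enter. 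A secondary, more routine point is handling the boundary case $\varphi(0)=0$ with $\varphi\not\equiv0$, where one still has $v(\varphi)=u(0)$ possibly with equality in \eqref{eq:sec_upper_bound_v}, but there the derivative comparison at $t=0$ itself already gives the strict separation for $t>0$.
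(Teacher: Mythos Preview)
The paper omits the proof of this lemma, so a line-by-line comparison is impossible; nonetheless the logical placement of the results makes the intended argument clear, and your core strategy (strict inequality at $t=0$, first-touching-time contradiction, derivative comparison via $\rho<\rho_2$) is exactly right and would be accepted.

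Where your proposal goes astray is in the identification of the ``main obstacle''. You worry that you must show $x_t\in S_\alpha$ on $[0,t^\star]$ in order to secure $\|x_t\|_\mathscr{H}\le\delta$ and hence to invoke Lemma~\ref{lemma:connect_dv_v_w}. This is not needed here. The Remark following Theorems~\ref{thm:attr_region_LK_w} and~\ref{thm:attr_region_LK_set_S} already asserts, as a consequence of those theorems, that $\|x(t,\varphi)\|_{r,p}<\delta$ for all $t\ge0$ whenever $\|\varphi\|_\mathscr{H}<\Delta_\alpha$; together with $\|\varphi\|_\mathscr{H}<\Delta_\alpha<\delta$ this gives $\|x_t\|_\mathscr{H}<\delta$ for every $t\ge0$ \emph{unconditionally}, with no reference to $S_\alpha$. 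Lemma~\ref{lemma:connect_dv_v_w} then applies on the whole half-line, and the comparison argument goes through directly. The set $S_\alpha$, the lower bound~\eqref{eq:lower_bound_set_S}, and the restriction~\eqref{add_restrict} of Lemma~\ref{lemma:add_restric} play no role in the present lemma; they enter only in Theorem~\ref{thm_set_S_prelim}, where one must pass from a bound on $v(x_t)$ to a bound on $\|x(t)\|_{r,p}$ via $\tilde a_1$. In short, your proof is correct but carries unnecessary baggage: once you cite the Remark, the ``bootstrapping'' paragraph can be deleted and the argument collapses to the three lines you wrote before it.
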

\begin{lemma} \label{lemma:add_restric}
If the condition
\begin{equation}
\label{add_restrict}
1+\rho h\left(\frac{\mu}{\gamma}\right)(\alpha_1 +b_3\Delta_\alpha^{\mu})^{\frac{\mu}{\gamma}}\Delta_\alpha^{\mu}\leq \alpha^{\mu}
\end{equation}
holds, then $u(t+\theta)<\alpha^\gamma u(t)$
for all $t\geq 0$ and $\theta\in[-h,0]$ such that $t+\theta\geq 0.$
\end{lemma}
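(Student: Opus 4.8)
The plan is to exploit the explicit monotonicity of the comparison solution $u(t)$ together with the chosen initial value $\tilde{u}_0$. Since $\rho>0$ and $u(t)=\tilde{u}_0\left[1+\rho\left(\frac{\mu}{\gamma}\right)\tilde{u}_0^{\mu/\gamma}t\right]^{-\gamma/\mu}$, the function $u$ is positive and strictly decreasing on $[0,\infty)$. Hence for $\theta\in[-h,0]$ with $t+\theta\ge 0$ the worst case is $\theta=-h$, and it suffices to prove $u(t-h)<\alpha^\gamma u(t)$ whenever $t\ge h$. Writing the ratio explicitly and abbreviating $\kappa=\rho\left(\frac{\mu}{\gamma}\right)\tilde{u}_0^{\mu/\gamma}$, we have
\begin{equation*}
\frac{u(t-h)}{u(t)}=\left(\frac{1+\kappa t}{1+\kappa(t-h)}\right)^{\gamma/\mu}=\left(1+\frac{\kappa h}{1+\kappa(t-h)}\right)^{\gamma/\mu}.
\end{equation*}

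The right-hand side is decreasing in $t$ on $[h,\infty)$, so its supremum is attained at $t=h$, giving
\begin{equation*}
\frac{u(t-h)}{u(t)}\le (1+\kappa h)^{\gamma/\mu}\qquad\text{for all }t\ge h.
\end{equation*}
Therefore it is enough to verify $(1+\kappa h)^{\gamma/\mu}\le \alpha^\gamma$, i.e. $1+\kappa h\le\alpha^{\mu}$ (raising both sides to the power $\mu/\gamma>0$ preserves the inequality). Finally I would bound $\tilde{u}_0=(\alpha_1+b_3\Delta_\alpha^{\mu})\|\varphi\|_\mathscr{H}^\gamma$ from above using $\|\varphi\|_\mathscr{H}<\Delta_\alpha$, which yields $\tilde{u}_0^{\mu/\gamma}\le(\alpha_1+b_3\Delta_\alpha^{\mu})^{\mu/\gamma}\Delta_\alpha^{\mu}$, so that $\kappa h\le \rho h\left(\frac{\mu}{\gamma}\right)(\alpha_1+b_3\Delta_\alpha^{\mu})^{\mu/\gamma}\Delta_\alpha^{\mu}$. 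Combining, the hypothesis \eqref{add_restrict} gives exactly $1+\kappa h\le\alpha^{\mu}$, completing the argument.

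The only delicate points are bookkeeping ones: one must keep track of which exponents are positive so that the various monotonicity and power manipulations go through (here $\mu>0$ and $\gamma>0$, so $\gamma/\mu>0$, and all bases are $>1$), and one must handle the case $0\le t<h$ separately — but there the constraint $t+\theta\ge 0$ with $\theta\in[-h,0]$ forces $t+\theta\le t<h$ as well, and since $u$ is decreasing we trivially have $u(t+\theta)\le u(0)=\tilde u_0$, while $\alpha^\gamma u(t)\ge\alpha^\gamma u(h)$; a short computation reusing the same bound shows $u(t+\theta)<\alpha^\gamma u(t)$ still holds, or more simply one notes that for $t+\theta\ge 0$ the ratio identity above holds verbatim with $h$ replaced by the actual gap $-\theta\le h$, so the estimate $\frac{u(t+\theta)}{u(t)}\le(1+\kappa(-\theta))^{\gamma/\mu}\le(1+\kappa h)^{\gamma/\mu}$ is valid for every admissible $t,\theta$ at once. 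I expect no real obstacle; the lemma is essentially a direct computation with the closed-form solution, and the role of $\alpha>1$ is precisely to guarantee that condition \eqref{add_restrict} is satisfiable for sufficiently small $\rho$, which is what the surrounding text already asserts.
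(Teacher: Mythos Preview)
The paper omits the proof of this lemma due to length limitations, so there is nothing to compare against directly; your argument is correct and is almost certainly the intended one, since the lemma is really just a direct computation with the closed-form solution of the comparison equation. Your ``or more simply'' observation at the end---writing the ratio with the generic gap $-\theta\le h$ rather than splitting off the case $t<h$---is the clean way to handle all admissible $(t,\theta)$ at once, and you should lead with it rather than treating it as an afterthought. One small point of bookkeeping: the lemma asserts a \emph{strict} inequality $u(t+\theta)<\alpha^\gamma u(t)$, while condition~\eqref{add_restrict} is non-strict; the strictness is recovered because the ambient hypothesis $\|\varphi\|_\mathscr{H}<\Delta_\alpha$ (from Lemma~\ref{lemma:strict_comparison} and Theorem~\ref{thm_set_S_prelim}) makes your bound $\tilde u_0^{\mu/\gamma}<(\alpha_1+b_3\Delta_\alpha^{\mu})^{\mu/\gamma}\Delta_\alpha^{\mu}$ strict, hence $1+\kappa h<\alpha^{\mu}$ strictly---you use this but write $\le$, so just tighten that line.
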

\begin{theorem} \label{thm_set_S_prelim}
If $\|\varphi\|_\mathscr{H}<\Delta_\alpha$ and inequality \eqref{add_restrict} holds, then
$$
\tilde{a}_1\|x(t,\varphi)\|_{r,p}^\gamma< u(t),\quad t\geq 0.
$$
\end{theorem}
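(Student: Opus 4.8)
The plan is to run a Razumikhin-type first-crossing-time argument for the scalar quantity $g(t):=\tilde a_1\|x(t,\varphi)\|_{r,p}^{\gamma}$, with the three preceding lemmas as building blocks. First I would collect the preliminaries. Since $\tilde a_1<\alpha_0\le\alpha_1<\alpha_1+b_3\Delta_\alpha^{\mu}$, the equation defining $\Delta_\alpha$ in Theorem~\ref{thm:attr_region_LK_set_S} gives $\Delta_\alpha<\delta$; then the Remark after that theorem, together with $\|\varphi(\theta)\|_{r,p}\le\|\varphi\|_\mathscr{H}<\Delta_\alpha<\delta$, yields $\|x_t(\varphi)\|_\mathscr{H}<\delta$ for all $t\ge0$. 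Moreover $g(0)\le\tilde a_1\|\varphi\|_\mathscr{H}^{\gamma}<(\alpha_1+b_3\Delta_\alpha^{\mu})\|\varphi\|_\mathscr{H}^{\gamma}=\tilde u_0=u(0)$. Assuming for contradiction that $g(t)\ge u(t)$ for some $t>0$, continuity of $g$ and $u$ produces a smallest $t^{*}>0$ with $g(t^{*})=u(t^{*})$ and $g(t)<u(t)$ on $[0,t^{*})$.

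The core of the argument is to show $x_{t^{*}}(\varphi)\in S_\alpha$, i.e.\ $\|x(t^{*}+\theta,\varphi)\|_{r,p}\le\alpha\|x(t^{*},\varphi)\|_{r,p}$ for every $\theta\in[-h,0]$. Set $s=t^{*}+\theta$. If $s\ge0$, then $g(s)\le u(s)$ and Lemma~\ref{lemma:add_restric} gives $u(s)<\alpha^{\gamma}u(t^{*})$. If $s<0$ (which can happen only when $t^{*}<h$), then $g(s)=\tilde a_1\|\varphi(s)\|_{r,p}^{\gamma}\le\tilde a_1\|\varphi\|_\mathscr{H}^{\gamma}<\tilde u_0=u(0)$, while Lemma~\ref{lemma:add_restric} applied with $t=h,\ \theta=-h$ gives $u(0)<\alpha^{\gamma}u(h)\le\alpha^{\gamma}u(t^{*})$, the last step using that $u$ is decreasing and $t^{*}<h$. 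In either case $\tilde a_1\|x(s,\varphi)\|_{r,p}^{\gamma}<\alpha^{\gamma}u(t^{*})=\alpha^{\gamma}\tilde a_1\|x(t^{*},\varphi)\|_{r,p}^{\gamma}$, and dividing by $\tilde a_1>0$ and taking $\gamma$-th roots yields the desired bound.

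To close, since $x_{t^{*}}(\varphi)\in S_\alpha$ and $\|x_{t^{*}}(\varphi)\|_\mathscr{H}<\delta$, Lemma~\ref{lemma:lower_bound_set_S} gives $\tilde a_1\|x(t^{*},\varphi)\|_{r,p}^{\gamma}\le v(x_{t^{*}})$, and Lemma~\ref{lemma:strict_comparison} gives $v(x_{t^{*}})<u(t^{*})$; hence $g(t^{*})<u(t^{*})$, contradicting $g(t^{*})=u(t^{*})$. Therefore $\tilde a_1\|x(t,\varphi)\|_{r,p}^{\gamma}<u(t)$ for all $t\ge0$.

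I expect the main obstacle to be exactly the membership $x_{t^{*}}(\varphi)\in S_\alpha$: for the future portion of the segment ($s=t^{*}+\theta\ge0$) this is precisely what Lemma~\ref{lemma:add_restric} was engineered to deliver, but when $t^{*}<h$ part of the segment is the prescribed initial data $\varphi$, which satisfies no Razumikhin-type bound a priori; the remedy is the strict gap $\tilde a_1\|\varphi\|_\mathscr{H}^{\gamma}<u(0)$ (coming from $\tilde a_1<\alpha_1+b_3\Delta_\alpha^{\mu}$) combined with $u(0)<\alpha^{\gamma}u(h)\le\alpha^{\gamma}u(t^{*})$, again via Lemma~\ref{lemma:add_restric}. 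It is also worth noting that the trivial case $\varphi\equiv0$ must be excluded (there $u\equiv0$ and the strict inequality is vacuous), consistently with the convention already implicit in Lemma~\ref{lemma:strict_comparison}.
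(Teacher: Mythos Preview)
Your argument is correct and follows precisely the route the paper outlines: the paper omits the proof but explicitly states that Theorem~\ref{thm_set_S_prelim} is obtained by combining Lemmas~\ref{lemma:strict_comparison} and~\ref{lemma:add_restric} with the lower bound on $S_\alpha$, adapting the Razumikhin first-crossing idea of \cite{aleksandrov2012asymptotic}, which is exactly your contradiction argument on $g(t)=\tilde a_1\|x(t,\varphi)\|_{r,p}^{\gamma}$. Your handling of the segment of initial data when $t^{*}<h$ (via $\tilde a_1\|\varphi\|_\mathscr{H}^{\gamma}<u(0)<\alpha^{\gamma}u(h)\le\alpha^{\gamma}u(t^{*})$) and the exclusion of the trivial case $\varphi\equiv 0$ are the right details to fill in.
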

Based on Lemmas~\ref{lemma:strict_comparison}, \ref{lemma:add_restric} and Theorem~\ref{thm_set_S_prelim} we present the main result of the section.
\begin{theorem}\label{th:estimation_S}
Let Assumptions~\ref{as:hom_degree}, \ref{as:delay_free_asymp} and inequality \eqref{add_restrict} hold. Then, the solutions of system \eqref{eq:time-delay_system} corresponding to the initial functions with $\|\varphi\|_\mathscr{H}<\Delta_\alpha,$ where $\Delta_\alpha$ is defined in Theorem~\ref{thm:attr_region_LK_set_S}, admit an estimate of the form \eqref{eq:estimate_final}
with
\begin{align*}
\hat{c}_1&=\left(\frac{\alpha_1+b_3 \Delta_\alpha^{\mu}}{\tilde{a}_1}\right)^\frac{1}{\gamma}=\frac{\delta}{\Delta_\alpha},
\\
\hat{c}_2&=\frac{c}{b}\left(\frac{\mu}{\gamma}\right)\left(\frac{\alpha_1+b_3 \Delta_\alpha^{\mu}}{2b\max\{1,h\}}\right)^{\frac{\mu}{\gamma}}.
\end{align*}
\end{theorem}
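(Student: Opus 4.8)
The plan is to mimic the proof of Theorem~\ref{th:estimation} almost verbatim, but replace the role of Lemma~\ref{lemma:connect_dv_v_w} and the lower bound \eqref{eq:lower_bound} with the chain assembled in Section~\ref{subs_via_set_S}. First I would recall that, since $\|\varphi\|_\mathscr{H}<\Delta_\alpha\le\Delta$, the Remark after Theorem~\ref{thm:attr_region_LK_set_S} guarantees $\|x_t(\varphi)\|_\mathscr{H}<\delta$ for all $t\ge 0$, so bounds \eqref{eq:sec_bound_dot_v}, \eqref{eq:first_upper_bound_v} and relation \eqref{inqx_w} remain valid along the solution; combining them as in Lemma~\ref{lemma:connect_dv_v_w} yields $\dot v(x_t)\le-\rho_2 v^{(\gamma+\mu)/\gamma}(x_t)\le-\rho\, v^{(\gamma+\mu)/\gamma}(x_t)$ because $\rho<\rho_2$. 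From $v(x_0)=v(\varphi)\le\alpha_1\|\varphi(0)\|^\gamma+b_3\|\varphi\|_\mathscr{H}^{\gamma+\mu}\le(\alpha_1+b_3\Delta_\alpha^\mu)\|\varphi\|_\mathscr{H}^\gamma=\tilde u_0$ (using $\|\varphi\|_\mathscr{H}<\Delta_\alpha$ and the corollary bound \eqref{eq:sec_upper_bound_v}) and a comparison-principle argument one obtains $v(x_t)\le u(t)$, which is exactly Lemma~\ref{lemma:strict_comparison}.

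Next I would invoke Theorem~\ref{thm_set_S_prelim}: since inequality \eqref{add_restrict} is assumed, we get $\tilde a_1\|x(t,\varphi)\|_{r,p}^\gamma< u(t)$ for all $t\ge 0$. This is the point where the set $S_\alpha$ machinery pays off — the lower bound \eqref{eq:lower_bound_set_S}, valid only on $S_\alpha$, is transferred to a genuine pointwise bound on the solution thanks to Lemma~\ref{lemma:add_restric}, which shows that condition \eqref{add_restrict} forces every segment $x_t(\varphi)$ to lie in $S_\alpha$ (via $u(t+\theta)<\alpha^\gamma u(t)$ and $v(x_{t+\theta})<u(t+\theta)$). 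Plugging in the explicit solution $u(t)=\tilde u_0[1+\rho(\mu/\gamma)\tilde u_0^{\mu/\gamma}t]^{-\gamma/\mu}$ and taking $\gamma$-th roots gives
\begin{equation*}
\|x(t,\varphi)\|_{r,p}<\left(\frac{\tilde u_0}{\tilde a_1}\right)^{1/\gamma}\left[1+\rho\left(\frac{\mu}{\gamma}\right)\tilde u_0^{\mu/\gamma}t\right]^{-1/\mu}.
\end{equation*}

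It remains to match this with the form \eqref{eq:estimate_final}. The prefactor is $(\tilde u_0/\tilde a_1)^{1/\gamma}=((\alpha_1+b_3\Delta_\alpha^\mu)/\tilde a_1)^{1/\gamma}$, and one checks that this equals $\delta/\Delta_\alpha$ directly from the defining equation $\alpha_1\Delta_\alpha^\gamma+b_3\Delta_\alpha^{\gamma+\mu}=\tilde a_1\delta^\gamma$ in Theorem~\ref{thm:attr_region_LK_set_S}; this gives $\hat c_1$. For $\hat c_2$ I would substitute $\tilde u_0=(\alpha_1+b_3\Delta_\alpha^\mu)\|\varphi\|_\mathscr{H}^\gamma$, so that $\tilde u_0^{\mu/\gamma}=(\alpha_1+b_3\Delta_\alpha^\mu)^{\mu/\gamma}\|\varphi\|_\mathscr{H}^\mu$, making the denominator $[1+\hat c_2\|\varphi\|_\mathscr{H}^\mu t]^{1/\mu}$ with $\hat c_2=\rho(\mu/\gamma)(\alpha_1+b_3\Delta_\alpha^\mu)^{\mu/\gamma}$. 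To recover exactly the stated $\hat c_2=(c/b)(\mu/\gamma)((\alpha_1+b_3\Delta_\alpha^\mu)/(2b\max\{1,h\}))^{\mu/\gamma}$ one takes the limiting admissible choice $\rho\to\rho_2=c/(\rho_1 b^{(\gamma+\mu)/\gamma})$ with $\rho_1=(2\max\{1,h\})^{\mu/\gamma}$, since then $\rho(\alpha_1+b_3\Delta_\alpha^\mu)^{\mu/\gamma}=(c/b)((\alpha_1+b_3\Delta_\alpha^\mu)/(2b\max\{1,h\}))^{\mu/\gamma}$; the strict inequality $\rho<\rho_2$ is harmless because the estimate is a closed bound obtained in the limit, or one simply states the result with $\rho$ and notes $\rho_2$ is its supremum.

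The main obstacle I anticipate is the interlocking nature of Lemmas~\ref{lemma:strict_comparison}--\ref{lemma:add_restric} and Theorem~\ref{thm_set_S_prelim}: the bound $v(x_t)<u(t)$ is needed to conclude each segment is in $S_\alpha$, but the $S_\alpha$-lower bound is in turn what one wants to apply — so the argument is inherently a continuity/bootstrap one (suppose $t^\ast$ is the first time a segment leaves $S_\alpha$, derive a contradiction from \eqref{add_restrict}). Since those three statements are given as already-proved lemmas in the excerpt, for this theorem I only need to chain them correctly and perform the algebraic identification of $\hat c_1$ and $\hat c_2$; the genuinely delicate reasoning has been encapsulated upstream.
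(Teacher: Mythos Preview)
Your proposal is correct and follows exactly the route the paper intends: the paper itself offers no detailed proof but only the sentence ``Based on Lemmas~\ref{lemma:strict_comparison}, \ref{lemma:add_restric} and Theorem~\ref{thm_set_S_prelim} we present the main result,'' and your argument is the natural unpacking of that chain --- bound $v(\varphi)$ by $\tilde u_0$ via \eqref{eq:sec_upper_bound_v}, propagate through the comparison ODE to get $v(x_t)<u(t)$ (Lemma~\ref{lemma:strict_comparison}), invoke Theorem~\ref{thm_set_S_prelim} for the $\tilde a_1$ lower bound, and then read off $\hat c_1,\hat c_2$.

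Two minor remarks. First, the inequality ``$\Delta_\alpha\le\Delta$'' that you slip in is unnecessary and in fact typically goes the other way (the whole point of the $S_\alpha$ approach is that $\tilde a_1\ge a_1$, hence $\Delta_\alpha\ge\Delta$ for the same $\delta$); you do not need it, since the Remark after Theorem~\ref{thm:attr_region_LK_set_S} already gives $\|x_t\|_\mathscr{H}<\delta$ directly from $\|\varphi\|_\mathscr{H}<\Delta_\alpha$. Second, your observation about $\rho<\rho_2$ versus the stated $\hat c_2$ (which corresponds to $\rho=\rho_2$) is well taken; the paper is silent on this, and your limiting argument --- the strict bound holds for every $\rho<\rho_2$, pass to the limit to recover the non-strict estimate with $\rho_2$ --- is the cleanest way to reconcile the stated constant with the strict inequality required in Lemma~\ref{lemma:strict_comparison}.
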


\section{ILLUSTRATIVE EXAMPLE}
Consider the following system, which is
used to model complex interactions, either instantaneous or delayed, occurring amongst transcription
factors and target genes  \cite{efimov2016global}:
\begin{gather}
\begin{split} \label{example}
\dot{x}_1(t)&=-\kappa_1x_1^2(t)+\lambda_1x_2(t-h),
\\
\dot{x}_2(t)&=-\kappa_2x_2^{3/2}(t)+\lambda_2x_2(t)x_1(t-h).
\end{split}
\end{gather}
Here $x_1(t),x_2(t)\in\mathbb{R}^+$ represent interactions occurring in a genetic
network, $h>0$ is the transition delay in the network, and $\kappa_1,\kappa_2,\lambda_1,\lambda_2$ are positive parameters. System \eqref{example} is $\delta^r$-homogeneous of degree $\mu=1$ with $(r_1,r_2)=(1,2).$ Set $\gamma=4$ and consider the Lyapunov function 
\begin{equation*}
   V(x)=x_1^4+x_2^2,   
\end{equation*}
which is positive definite. Its derivative along the trajectories of system \eqref{example} when $h=0$ is of the form 
\begin{gather*}
    \dot{V}(x)=-4\kappa_1x_1^5+4\lambda_1x_1^3x_2-2\kappa_2x_2^{5/2}+2\lambda_2x_2^2x_1\\
    \leq -2\min\{2\kappa_1,\kappa_2\}(x_1^5+x_2^{5/2})+4\max\{2\lambda_1,\lambda_2\}\|x(t)\|_{r,p}^5.
\end{gather*}
Choosing $p=5$ for the homogeneous norm, we arrive at bound \eqref{eq:bound_dot_V} with
$\mathrm{w}=2\min\{2\kappa_1,\kappa_2\}-4\max\{2\lambda_1,\lambda_2\}.$
Compute the other constants:
$m_1=\max\{\kappa_1,\lambda_1\},$  $m_2=\kappa_2+\lambda_2,$  $\eta_{11}=2\kappa_1,$ $\eta_{12}=\eta_{21}=0,$  $\eta_{22}=\max\{1.5\kappa_2,\lambda_2\},$  $\beta_1=4,$
$\beta_2=2,$ $\psi_{11}=12,$ $\psi_{12}=\psi_{21}=0,$ $\psi_{22}=2,$ $\alpha_0=1$ and $\alpha_1=2^{1/5}$.
\begin{figure}[H]
     \centering
      \includegraphics[width=0.5\textwidth]{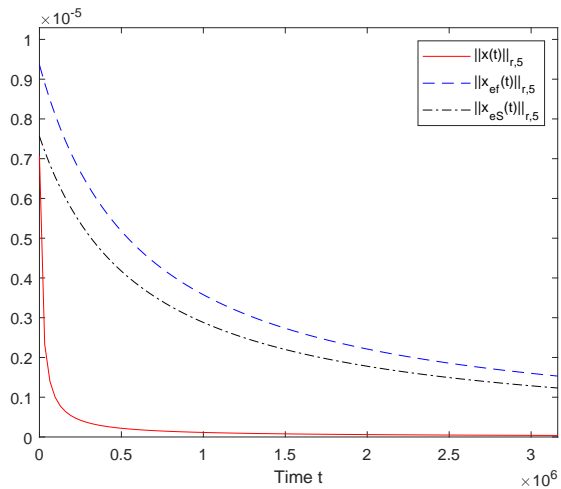}
      \caption{Estimation of the solution of system \eqref{example} for $p=5$}
      \label{figure_estimation_w_1_2}
  \end{figure}
For the parameters $(\kappa_1,\kappa_2,\lambda_1,\lambda_2)=(9,18,0.25,0.5)$ and the initial function $\varphi(\theta)=[5\cdot 10^{-11},5\cdot 10^{-11}]$, $\theta \in [-10,0],$ the system response (continuous line) and the estimates using Theorem~\ref{th:estimation} (dashed line) and  Theorem~\ref{th:estimation_S} (dashed-dot line) are depicted on Figure~\ref{figure_estimation_w_1_2}. We conclude that the use of the set $S_\alpha$ allows us to obtain a tighter estimate than those via the classical approach.

\section{Conclusion}
In this paper, we present a Lyapunov-Krasovskii functional for weighted homogeneous time delay systems of positive degree and show its potential as an analysis and design tool by computing the estimates of the domain of attraction and of the system solutions.

\bibliographystyle{IEEEtran}
\bibliography{IEEEabrv,mybibliography}

\end{document}